\documentclass[11pt]{article}
\usepackage[T2A]{fontenc}
\usepackage{ alphalph,etoolbox }
\usepackage{amsthm, amsmath, amssymb, amsbsy, amscd, amsfonts, latexsym, euscript,epsf, bm}
\newtheorem{thm}{Theorem}[section]
\usepackage{bbold, bm}
\usepackage{epic,eepic}
\usepackage{texdraw}
\usepackage[dvips]{color}
\usepackage{color}
\usepackage{ dsfont }
\usepackage{graphicx}
\usepackage{exscale,relsize}
\usepackage{authblk}
\usepackage{url}
\usepackage{enumerate}
\usepackage{graphicx}
\newtheorem{proposition}[thm]{Proposition}

\newtheorem{lemma}[thm]{Lemma}

\newtheorem{remark}[thm]{Remark}

\newtheorem{theorem}[thm]{Theorem}

\DeclareMathOperator{\End}{End}
\allowdisplaybreaks

\sloppy

\topmargin=-.5cm \textheight=22cm \oddsidemargin=.0cm
\evensidemargin=.0cm
\textwidth=17.5cm

\title{Noncommutative solutions to Zamolodchikov's tetrahedron equation and matrix six-factorisation problems}
\date{}

\author[1]{S. Konstantinou-Rizos\thanks{skonstantin84@gmail.com}}
\affil[1]{Centre of Integrable Systems, P.G. Demidov Yaroslavl State University, Yaroslavl, Russia}

\patchcmd{\subequations}{\alph{equation}}{\alphalph{\value{equation}}}{}{}

\begin{document}

\maketitle

\begin{abstract}
It is known that the local Yang--Baxter equation is a generator of potential solutions to Zamolodchikov's tetrahedron equation. In this paper, we show under which additional conditions the solutions to the local Yang--Baxter equation are tetrahedron maps, namely solutions to the set-theoretical tetrahedron equation. This is exceptionally useful when one wants to prove that noncommutative maps satisfy the Zamolodchikov's tetrahedron equation. We construct new noncommutative maps and we prove that they possess the tetrahedron property. Moreover, by employing Darboux transformations with noncommutative variables, we derive noncommutative tetrahedron maps. In particular, we derive a noncommutative nonlinear Schr\"odinger type of tetrahedron map which can be restricted to a noncommutative version of Sergeev's map on invariant leaves. We prove that these maps are tetrahedron maps.
\end{abstract}

\bigskip

\hspace{.2cm} \textbf{PACS numbers:} 02.30.Ik, 02.90.+p, 03.65.Fd.

\hspace{.2cm} \textbf{Mathematics Subject Classification 2020:} 35Q55, 16T25.

\hspace{.2cm} \textbf{Keywords:} Functional tetrahedron equation, Zamolodchikov's equation, noncommutative tetrahe-

\hspace{2.4cm} dron maps, local Yang--Baxter equation, matrix six-factorisation problem, Darboux

\hspace{2.4cm} transformations, noncommutative NLS type tetrahedron maps, noncommutative 

\hspace{2.4cm} Sergeev's map.

\section{Introduction}\label{intro}
The Zamolodchikov's tetrahedron equation is a higher-dimensional analogue of the Yang--Baxter equation, one of the most fundamental equations of mathematical physics, and it was first introduced by Zamolodchikov \cite{Zamolodchikov, Zamolodchikov-2}. The first solutions to the tetrahedron equation were conjectured by Zamolodchikov himself, based on numerical works, but were proved to be solutions by Baxter \cite{Baxter-1983} who also calculated the free energy of the associated solvable three-dimensional model in the limit of an infinite lattice \cite{Baxter-1986}.

Tetrahedron maps, namely solutions to the Zamolodchikov's functional tetrahedron equation, are of great significance in the the theory of integrable systems since they are strictly related to integrable three-dimensional lattice equations (see, e.g., \cite{Bazhanov-Mangazeev-Sergeev, Doliwa-Kashaev, Kassotakis-Tetrahedron} and the references therein) which also discretise nonlinear integrable PDEs, and at the same time have very interesting algebro-geometric properties (see, e.g., \cite{Bazhanov-Sergeev, Bazhanov-Mangazeev-Sergeev, IKKRP, Sergei-Sokor, Kapranov-Voevodsky}). On the other hand, noncommutative versions or extensions of integrable systems have been a growing field over the past few decades, with many applications in mathematical physics, and have been in the centre of interest for many scientists (indicatively we refer to \cite{Bobenko-Suris, Dimakis-Hoissen, Dimakis-Hoissen-2015, Doliwa-Noumi, Kupershmidt, Nijhoff-Capel, Nimmo, Talalaev}). Therefore, there is a natural need to study the noncommutative solutions to the tetrahedron equation. Indeed, there are recent results in the literature on the study of noncommmutative tethrahedron maps \cite{Doliwa-Kashaev, Sergei-Sokor}.

One important relation between the tetrahedron equation and the Yang--Baxter equation is that the solutions to the local Yang--Baxter equation are possible solutions to the tetrahedron equation \cite{Maillet-Nijhoff, Kashaev-Sergeev}. In fact, a map which is derived by substitution of a square matrix to the local Yang--Baxter equation may satisfy the tetrahedron equation \cite{Maillet-Nijhoff}. In the commutative case, the proof that a map is a tetrahedron map is a matter of straightforward substitution to the tetrahedron equation. However, if one deals with noncommutative variables, since the substitution to the tetrahedron equation involves several compositions between nonlinear maps, the proof of the tetrahedron property is a difficult task and depends on the form of the map itself.

In this paper, we show which additional matrix condition must be satisfied in order for a solution of the local Yang--Baxter equation to be a tetrahedron map. This involves the consideration of a matrix six-factorisation problem, and it is motivated by similar results obtained by Kouloukas and Papageorgiou \cite{Kouloukas-Papageorgiou} for verifying whether a solution to a matrix refactorisation problem is a Yang--Baxter map. This matrix six-factorisation property is exceptionally useful when one works with maps with noncommutative variables, since it can be used to prove whether a map is a tetrahedron map without actually using the map, but just its matrix generator. We construct new examples of tetrahedron maps with matrix generators related to a gauge transformation for the lattice modified KdV (mKdV) equation \cite{Frank-Walker} and also to one-dimensional relativistic elastic collision of two particles \cite{Kouloukas}. Furthermore, we show how can one construct noncommutative tetrahedron maps using Darboux transformations. As an illutrative example, we employ a Darboux transformation related to the noncommutative coupled NLS system which gives rise to a noncommutative six-dimensional tetrahedron map which can be restricted on inviariant leaves to a noncommutative version of Sergeev's map \cite{Sergeev}. We employ the matrix six-factorisation condition to prove the tetrahedron property of all the maps derived in this paper. 

The paper is organised as follows. 

In the next section, we fix the notation we are using throughout the text, and we provide the basic definitions which are needed for the text to be self-contained. In particular, we give the definition of a tetrahedron map and a parametric tetrahedron map, and also we explain the relation between the latter and the local Yang--Baxter equation, namely we define the Lax representation for tetrahedron maps.

In section \ref{Matrix-six-factorisation_problem}, we explain what additional conditions must the solutions to the local Yang--Baxter equation satisfy in order to be tetrahedron maps. Specifically, we prove that a solution of the local Yang--Baxter equation is a tetrahedron map if a certain matrix six-factorisation condition implies the trivial solution. Furthermore, we demonstrate how this is useful when we deal with noncommutative maps using a noncommutative Hirota map \cite{Doliwa-Kashaev} as an illustrative example. Finally, we construct new correspondences, which define novel, noncommutative tetrahedron maps, generated by matrices related to a gauge  transformation for the lattice mKdV equation \cite{Frank-Walker} and also to a one-dimensional relativistic elastic collision of two particles \cite{Kouloukas}. Using these examples, we demonstrate that the matrix six-factorisation condition can be also used for correspondences which satisfy the local Yang--Baxter equation.

Section \ref{NLS-noncomm_tetrahedron_maps} deals with the construction of noncommutative tetrahedron maps. The tetrahedron property is proven for the maps of this section using the matrix six-factorisation condition presented in section \ref{Matrix-six-factorisation_problem}. In particular, we employ a noncommutative Darboux transformation of NLS type in order to construct a correspondence satisfying the local Yang--Baxter equation which defines a noncommutative tetrahedron map. We show that the latter map can be restricted to a noncommutative version of Sergeev's map on invariant leaves.

In section \ref{conclusions}, we close with some concluding remarks and ideas for possible extensions of our results.

\section{Preliminaries}\label{prelim}
In this section, we explain the relation between the solutions to the local Yang--Baxter equation and the solutions to the functional tetrahedron equation.

\subsection{Notation}
Throughout the text:
\begin{itemize}
    \item By $\mathcal{X}$ we denote an arbitrary set, whereas by Latin italic letters (i.e. $x, y, u, v$ etc.) the elements of $\mathcal{X}$, with an exception of the `spectral parameter' which is denoted by the Greek letter $\lambda$. Moreover, by $\End(\mathcal{X})$ we denote any map $\mathcal{X}\rightarrow\mathcal{X}$.
    \item By $\mathfrak{R}$ we denote a noncommutative division ring, and its elements are denoted by bold italic Latin letters (i.e. $\bm{x}, \bm{y}, \bm{u}$ etc.). That is, $\mathfrak{R}$ is an associative algebra with multiplicative identity $1$ where commutativity with respect to mutliplication is not assumed ($\bm{x}\bm{y}\neq \bm{y}\bm{x}$), and every nonzero element $\bm{x}$ has an inverse $\bm{x}^{-1}$, i.e. $\bm{x}\bm{x}^{-1}=\bm{x}^{-1}\bm{x}=1$.
    \item The centre of a division ring will be denoted by $Z(\mathfrak{R})=\{a\in\mathfrak{R}:\forall\bm{x}\in\mathfrak{R},a\bm{x}=\bm{x}a\}$.
    \item Matrices will be denoted by capital Roman straight letters (i.e. ${\rm A}, {\rm B}, {\rm C}$) etc. Additionally, matrix operators are denoted by capital Gothic letters (for instance, $\mathfrak{L}=D_x+\rm{U}$).
\end{itemize}

\subsection{Zamolodchikov's functional tetrahedron VS local Yang--Baxter equation}
A map $T\in\End(\mathcal{X}^3)$, namely
\begin{equation}\label{Tetrahedron_map}
 T:(x,y,z)\mapsto (u(x,y,z),v(x,y,z),w(x,y,z)),
\end{equation}
is called a \textit{tetrahedron map} if it satisfies the \textit{functional tetrahedron} (or Zamolodchikov's tetrahedron) equation
\begin{equation}\label{Tetrahedron-eq}
    T^{123}\circ T^{145} \circ T^{246}\circ T^{356}=T^{356}\circ T^{246}\circ T^{145}\circ T^{123}.
\end{equation}
Functions $T^{ijk}\in\End(\mathcal{X}^6)$, $i,j=1,2,3,~i\neq j$, in \eqref{Tetrahedron-eq} are maps that act as map $T$ on the $ijk$ terms of the Cartesian product $\mathcal{X}^6$ and trivially on the others. For instance,
$$
T^{246}(x,y,z,r,s,t)=(x,u(y,r,t),z,v(y,r,t),s,w(y,r,t)).
$$

A tetrahedron map can be represented on the cube as in Figure \ref{Tet_map}.

\begin{figure}[ht]
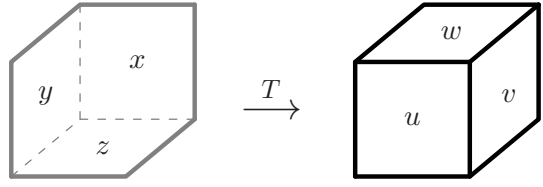

\centering
\centertexdraw{ 
\setunitscale .6
\move(1.6 1.5) \setgray 0.5 \linewd 0.04 \lvec (0.6 1.5) \lvec(0 1) \lvec(0 0)
\move (0 0) \linewd 0.04 \lvec (1 0) \lvec(1.6 .5) \linewd 0.01  \lpatt(0.067 0.09) \lvec(0.6 .5) \lvec(0 0) 
\lpatt() \move(1.6 .5)\linewd 0.04 \lvec(1.6 1.5) 
 \linewd 0.01 \lpatt(0.067 0.09)  \move(0.6 1.5)\lvec(0.6 .5) \lpatt()
\move(3 0)\setgray 0 \linewd 0.04 \lvec(3 1)\lvec(3.6 1.5)  \lvec(4.6 1.5)\setgray 0 \linewd 0.04
\move (3 0) \linewd 0.04 \lvec (4 0) \lvec(4.6 .5) \lvec(4.6 1.5) \linewd 0.04 \lvec(4 1) \lvec(4 0) 
\move(3 1)\lvec(4 1)

\htext (2 .5) {{\Large $\overset{T}{\longrightarrow}$}}
\textref h:C v:C \htext(.8 .25){$z$}
\textref h:C v:C \htext(.3 .7){$y$}
\textref h:C v:C \htext(1.1 1){$x$}

\textref h:C v:C \htext(3.8 1.25){ $w$}
\textref h:C v:C \htext(4.3 .7){ $v$}
\textref h:C v:C \htext(3.5 .5){$u$}

}
\caption{Tetrahedron map. Schematic representation.}\label{Tet_map}
\end{figure}

Furthermore, if we assign the complex parameters $a$, $b$ and $c$ to the variables $x$, $y$ and $z$, respectively, we define a map $T\in\End[(\mathcal{X}\times\mathbb{C})^3]$, namely $T:((x,a),(y,b),(z,c))\mapsto ((u(x,y,z),a),(v(x,y,z),b),(w(x,y,z),c))$ which we denote for simplicity as
\begin{equation}\label{Par-Tetrahedron_map}
 T_{a,b,c}:(x,y,z)\mapsto (u_{a,b,c}(x,y,z),v_{a,b,c}(x,y,z),w_{a,b,c}(x,y,z)).
\end{equation}
Map \eqref{Par-Tetrahedron_map} is called a \textit{parametric tetrahedron map} if it satisfies the \textit{parametric functional tetrahedron equation}
\begin{equation}\label{Par-Tetrahedron-eq}
    T^{123}_{a,b,c}\circ T^{145}_{a,d,e} \circ T^{246}_{b,d,f}\circ T^{356}_{c,e,f}=T^{356}_{c,e,f}\circ T^{246}_{b,d,f}\circ T^{145}_{a,d,e}\circ T^{123}_{a,b,c}.
\end{equation}

Now, let ${\rm L}={\rm L}(x,k)$ be a matrix depending on a variable $x\in\mathcal{X}$ and a parameter $k\in\mathbb{C}$ of the form
\begin{equation}\label{matrix-L}
   {\rm L}(x,k)= \begin{pmatrix} 
a(x,k) & b(x,k)\\ 
c(x,k) & d(x,k)
\end{pmatrix},
\end{equation}
where its entries $a, b, c$ and $d$ are scalar functions of $x$ and $k$. Let ${\rm L}^3_{ij}$, $i,j=1,2, 3$, $i\neq j$, be the $3\times 3$ extensions of matrix \eqref{matrix-L},  defined by
{\small
\begin{equation}\label{Lij-mat}
   {\rm L}^3_{12}=\begin{pmatrix} 
 a(x,k) &  b(x,k) & 0\\ 
c(x,k) &  d(x,k) & 0\\
0 & 0 & 1
\end{pmatrix},\quad
 {\rm L}^3_{13}= \begin{pmatrix} 
 a(x,k) & 0 & b(x,k)\\ 
0 & 1 & 0\\
c(x,k) & 0 & d(x,k)
\end{pmatrix}, \quad
 {\rm L}^3_{23}=\begin{pmatrix} 
   1 & 0 & 0 \\
0 & a(x,k) & b(x,k)\\ 
0 & c(x,k) & d(x,k)
\end{pmatrix},
\end{equation}
}
where ${\rm L}^3_{ij}={\rm L}^3_{ij}(x,k)$, $i,j=1,2,3$.

The following matrix trifactorisation problem
\begin{equation}\label{Lax-Tetra}
    {\rm L}^3_{12}(u,a){\rm L}^3_{13}(v,b){\rm L}^3_{23}(w,c)= {\rm L}^3_{23}(z,c){\rm L}^3_{13}(y,b){\rm L}^3_{12}(x,a),
\end{equation}
where matrices $L^3_{ij}$ are defined as in \eqref{Lij-mat}, is the Maillet--Nijhoff equation \cite{Nijhoff} in Korepanov's form, which appears in the literature as the \textit{local Yang--Baxter} equation.

Now, if a map of the form \eqref{Par-Tetrahedron_map} satisfies the local Yang--Baxter equation \eqref{Lax-Tetra}, then this map is a possible tetrahedron map. If, indeed, the above matrix trifactorisation problem defines a tetrahedron map, we will call equation \eqref{Lax-Tetra} its \textit{Lax representation}. Equation \eqref{Lax-Tetra} was used by Kashaev, Korepanov and Sergeev to classify the solutions to the Zamolodchikov's tetrahedron equation. In this paper, we consider the case where $a(x,k), b(x,k), c(x,k)$ and $d(x,k)$ in \eqref{Lij-mat} are scalar functions, however Korepanov studied equation \eqref{Lax-Tetra} the case where $a(x,k), b(x,k), c(x,k)$ and $d(x,k)$ in \eqref{Lij-mat} are matrices \cite{Korepanov}. 

In the next section, we will show under which conditions the solutions of the local Yang--Baxter equation are solutions to the Zamolodchikov's tetrahedron equation. 

\section{Matrix six-factorisation problem and tetrahedron maps}\label{Matrix-six-factorisation_problem}
Proving that a map satisfies the tetrahedron equation by straightforward substitution to the equation is usually a difficult task when we deal with noncommutative variables; it involves several compositions of maps with noncommutative variables and their inverses. Here, following the work of Kouloukas and Papageorgiou \cite{Kouloukas-Papageorgiou} for the case Yang--Baxter maps, we find the additional matrix conditions that the solutions of the local Yang--Baxter equation must satisfy in order to be tetrahedron maps.

Let $L^4_{ij}$, $i,j=1,\ldots 4$, $i\neq j$, be the $4\times 4$ generalisations of matrix ${\rm L}\equiv {\rm L}(x,k)=\begin{pmatrix} a(x,k) & b(x,k)\\ c(x,k) & d(x,k)\end{pmatrix}$, namely
\begin{align}\label{Lij4}
        &{\rm L}^4_{12}=\begin{pmatrix}a(x,k) & b(x,k) & 0 & 0 \\ c(x,k) & d(x,k) & 0 & 0 \\ 0 & 0 & 1 & 0 \\ 0 & 0 & 0 & 1\end{pmatrix}, \quad
    {\rm L}^4_{13}=\begin{pmatrix}a(x,k) & 0 & b(x,k) & 0 \\ 0 & 1 & 0 & 0 \\ c(x,k) & 0 & d(x,k) & 0 \\ 0 & 0 & 0 & 1\end{pmatrix},\quad
        {\rm L}^4_{23}=\begin{pmatrix}1 & 0 & 0 & 0 \\ 0 & a(x,k) & b(x,k) & 0 \\ 0 & c(x,k) & d(x,k) & 0 \\ 0 & 0 & 0 & 1\end{pmatrix}\nonumber\\
   & {\rm L}^4_{14}=\begin{pmatrix}a(x,k) & 0 & 0 & b(x,k) \\ 0 & 1 & 0 & 0 \\ 0 & 0 & 1 & 0 \\ c(x,k) & 0 & 0 & d(x,k)\end{pmatrix}, \quad
        {\rm L}^4_{24}=\begin{pmatrix}1 & 0 & 0 & 0 \\ 0 & a(x,k) & 0 & b(x,k) \\ 0 & 0 & 1 & 0 \\ 0 & c(x,k) & 0 & d(x,k)\end{pmatrix}, \quad
    {\rm L}^4_{34}=\begin{pmatrix}1 & 0 & 0 & 0 \\ 0 & 1 & 0 & 0 \\ 0 & 0 & a(x,k) & b(x,k) \\ 0 & 0 & c(x,k) & d(x,k)\end{pmatrix}.
\end{align}
Throughout the text the entries of the above matrices $a(x,k)$, $b(x,k)$, $c(x,k)$ and $d(x,k)$ are scalar functions of a variable $x$ and a parameter $k$. However, the following results can be generalised for the case where $a(x,k)$, $b(x,k)$, $c(x,k)$ and $d(x,k)$ are matrices.


\begin{lemma}\label{LijLji}
If matrices ${\rm L}^4_{ij}$ and ${\rm L}^4_{kl}$, where $i,j,k,l=1,\ldots, 4$, $i<j, k<l$ and $j\neq k$, are defined as in \eqref{Lij4}, then they commute. That is: ${\rm L}^4_{ij}{\rm L}^4_{jk}={\rm L}^4_{jk}{\rm L}^4_{ij}$.
\end{lemma}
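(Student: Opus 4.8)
The plan is to read off from \eqref{Lij4} the structural fact that each ${\rm L}^4_{pq}$ equals the $4\times4$ identity matrix except in the $2\times2$ sub-block formed by rows and columns $p$ and $q$; equivalently $({\rm L}^4_{pq})_{mn}=\delta_{mn}$ whenever $m\notin\{p,q\}$ (and likewise whenever $n\notin\{p,q\}$), while the entries in the block $\{p,q\}\times\{p,q\}$ are the generic entries $a,b,c,d$. The relevant hypothesis is that the index pairs $\{i,j\}$ and $\{k,l\}$ are disjoint, which, since the indices range over $\{1,2,3,4\}$, forces them to partition $\{1,2,3,4\}$. Thus the ``active'' block of ${\rm L}^4_{ij}$ occupies rows/columns $\{i,j\}$ while that of ${\rm L}^4_{kl}$ occupies the complementary rows/columns $\{k,l\}$, and two matrices whose non-identity parts sit in complementary coordinate blocks commute.

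Concretely I would verify the identity entrywise. Expanding $({\rm L}^4_{ij}{\rm L}^4_{kl})_{mn}=\sum_{p=1}^4 ({\rm L}^4_{ij})_{mp}\,({\rm L}^4_{kl})_{pn}$ and distinguishing the two cases $m\in\{i,j\}$ and $m\in\{k,l\}$: in the former only the terms with $p\in\{i,j\}$ survive, and for each such $p$ one has $p\notin\{k,l\}$, so $({\rm L}^4_{kl})_{pn}=\delta_{pn}$ and the sum collapses to $({\rm L}^4_{ij})_{mn}$; in the latter $m\notin\{i,j\}$, so $({\rm L}^4_{ij})_{mp}=\delta_{mp}$ and the sum collapses to $({\rm L}^4_{kl})_{mn}$. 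Performing the same case split on $({\rm L}^4_{kl}{\rm L}^4_{ij})_{mn}$ yields the identical two outputs, so the two products agree in every entry. Equivalently, after listing the basis vectors indexed by $\{i,j\}$ first and those indexed by $\{k,l\}$ second, both matrices become block-diagonal with one of their two diagonal blocks equal to ${\rm I}_2$, these identity blocks occupying complementary slots; the product in either order is then the common block-diagonal matrix built from the two $2\times2$ generators, which settles the claim.

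I do not expect a genuine obstacle here: the statement is a bookkeeping lemma about the supports of the matrices in \eqref{Lij4}. The only point worth some care — and the reason I would present the proof in the entrywise form above rather than, say, through an eigenvector or direct-sum argument phrased over a field — is the remark following \eqref{Lij4} that the entries $a,b,c,d$ may themselves be taken to be matrices; every simplification used above is multiplication by a Kronecker delta, i.e. by $0$ or $1$, and at no step is the commutativity of two ring elements invoked, so the argument transfers verbatim to that more general setting, which is the one needed later in the paper.
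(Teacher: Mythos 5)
Your proof is correct and complete; note that the paper states Lemma \ref{LijLji} without any proof at all, so there is nothing to compare it against --- your entrywise computation (equivalently, the block-diagonal picture after reordering the basis so that the coordinates $\{i,j\}$ precede $\{k,l\}$) is exactly the argument the author is implicitly relying on. Two remarks. First, you were right to read the hypothesis as ``$\{i,j\}\cap\{k,l\}=\emptyset$'': the lemma as literally printed is garbled (the condition ``$j\neq k$'' alone would admit, say, ${\rm L}^4_{12}$ and ${\rm L}^4_{13}$, which do \emph{not} commute --- compare the $(1,2)$ entries $b(x,k)$ and $a(y,k')b(x,k)$ of the two products --- and the displayed identity ``${\rm L}^4_{ij}{\rm L}^4_{jk}={\rm L}^4_{jk}{\rm L}^4_{ij}$'' is likewise false for overlapping pairs); the disjoint-pair version is precisely what is invoked in the proof of Theorem \ref{six-factorisation}, where only the swaps $\{1,2\}\leftrightarrow\{3,4\}$, $\{1,3\}\leftrightarrow\{2,4\}$ and $\{1,4\}\leftrightarrow\{2,3\}$ occur. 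Second, your observation that every cancellation is multiplication by a Kronecker delta, so that no commutativity of the entries is ever used, is the point that makes the lemma available in the noncommutative setting of Sections \ref{Matrix-six-factorisation_problem} and \ref{NLS-noncomm_tetrahedron_maps}, where the analogous $4\times4$ matrices ${\rm V}^4_{pq}$, ${\rm L}^4_{0,pq}$ and ${\rm M}^4_{pq}$ have entries in a division ring.
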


\begin{lemma}\label{3-eq-equiv}
Let matrices ${\rm L}_{ij}$ be defined as in \eqref{Lij4}. Then, the following equations are equivalent:{\small
\begin{align*}
    &{\rm L}^4_{12}(u,a){\rm L}^4_{13}(v,b){\rm L}^4_{23}(w,c)={\rm L}^4_{23}(z,c){\rm L}^4_{13}(y,b){\rm L}^4_{12}(x,a),\\
    &{\rm L}^4_{12}(u,a){\rm L}^4_{14}(v,b){\rm L}^4_{24}(w,c)={\rm L}^4_{24}(z,c){\rm L}^4_{14}(y,b){\rm L}^4_{12}(x,a),\\
    &{\rm L}^4_{23}(u,a){\rm L}^4_{24}(v,b){\rm L}^4_{34}(w,c)={\rm L}^4_{34}(z,c){\rm L}^4_{24}(y,b){\rm L}^4_{23}(x,a).
\end{align*}
}
\end{lemma}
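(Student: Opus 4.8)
The plan is to prove the three equations equivalent by exhibiting each as the conjugate of the others by a suitable permutation matrix; since conjugation by an invertible matrix is a ring automorphism of the $4\times4$ matrices, it transforms any matrix identity into another valid one, so the three equations will be simultaneously true or simultaneously false.

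First I would record the structural observation that the index not appearing in a given equation is a pure spectator. In the first equation only the indices $1,2,3$ occur, and each of ${\rm L}^4_{12},{\rm L}^4_{13},{\rm L}^4_{23}$ equals the identity in its fourth row and column; hence that equation is block diagonal with trivial fourth block, and its nontrivial $3\times3$ block is precisely the local Yang--Baxter equation \eqref{Lax-Tetra}. Likewise the second equation ignores the index $3$ and the third ignores the index $1$. Thus all three are the single equation \eqref{Lax-Tetra}, realised on the index sets $\{1,2,3\}$, $\{1,2,4\}$ and $\{2,3,4\}$ respectively.

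Next I would make this precise. For a permutation $\sigma$ of $\{1,2,3,4\}$ let ${\rm P}_\sigma$ be the associated permutation matrix acting on $\mathbb{C}^4$ by permuting basis vectors. Inspecting \eqref{Lij4} one checks the conjugation rule
\[
{\rm P}_\sigma\,{\rm L}^4_{ij}(x,k)\,{\rm P}_\sigma^{-1}={\rm L}^4_{\sigma(i)\sigma(j)}(x,k),\qquad i<j,\ \sigma(i)<\sigma(j),
\]
because conjugation by ${\rm P}_\sigma$ simultaneously permutes rows and columns, carrying the block entries $a,b,c,d$ from positions $(i,i),(i,j),(j,i),(j,j)$ to $(\sigma(i),\sigma(i)),(\sigma(i),\sigma(j)),(\sigma(j),\sigma(i)),(\sigma(j),\sigma(j))$ while the remaining diagonal ones stay on the diagonal. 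Then, taking $\sigma$ to be the transposition $(3\,4)$ --- which fixes the pair $(1,2)$ and maps $(1,3)\mapsto(1,4)$, $(2,3)\mapsto(2,4)$, all order preserving --- conjugation of the first equation by ${\rm P}_\sigma$ produces, factor by factor (conjugation being multiplicative, the order of the three factors on each side is preserved), exactly the second equation; conjugating back by ${\rm P}_\sigma^{-1}={\rm P}_\sigma$ gives the converse. Taking instead the $4$-cycle $\sigma\colon 1\mapsto 2\mapsto 3\mapsto 4\mapsto 1$, which sends $(1,2),(1,3),(2,3)$ to $(2,3),(2,4),(3,4)$, again order preservingly, conjugation turns the first equation into the third. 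This completes the equivalence.

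I do not expect a genuine obstacle; the only delicate point is the bookkeeping behind the conjugation rule --- the chosen permutations must preserve the order $i<j$ within every pair that appears, since a permutation reversing a pair would transpose the corresponding $2\times2$ block, interchanging the roles of $b$ and $c$, and so produce a different matrix. I would also remark that the argument is insensitive to commutativity of $a,b,c,d$: the permutation matrices ${\rm P}_\sigma$ have entries in $\{0,1\}\subset Z(\mathfrak{R})$, so the same proof applies verbatim to the matrix-valued generalisation mentioned after \eqref{Lij4}. (Note that Lemma \ref{LijLji} is not needed here.)
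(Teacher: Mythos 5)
Your proof is correct, but it takes a genuinely different route from the paper's. The paper disposes of this lemma by direct computation: one substitutes ${\rm L}$ into each of the three matrix equations and observes that all three yield exactly the same system of polynomial equations in $u,v,w,x,y,z$. You instead exhibit the three equations as conjugates of one another by permutation matrices ${\rm P}_\sigma$, using the rule ${\rm P}_\sigma{\rm L}^4_{ij}{\rm P}_\sigma^{-1}={\rm L}^4_{\sigma(i)\sigma(j)}$ (valid when $\sigma$ preserves the order of each pair that occurs), with $\sigma=(3\,4)$ relating the first and second equations and the $4$-cycle $1\mapsto2\mapsto3\mapsto4\mapsto1$ relating the first and third; your check that these permutations are order-preserving on every relevant pair, so that no $b$/$c$ swap occurs, is exactly the delicate point and you handle it correctly. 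Your argument is more structural and buys two things the paper's does not make explicit: it explains \emph{why} the three equations coincide (the omitted index is a spectator, so each is the $3\times3$ local Yang--Baxter equation \eqref{Lax-Tetra} embedded on a different index triple), and it applies verbatim when the entries $a,b,c,d$ are noncommutative or matrix-valued, since ${\rm P}_\sigma$ has central entries --- a generalisation the paper only asserts in passing after \eqref{Lij4}. The paper's substitution argument is shorter to state but is, as written, a scalar computation that would have to be repeated for the noncommutative case. Your observation that Lemma \ref{LijLji} is not needed here is also accurate.
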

\begin{proof}
By straightforward substitution of matrix ${\rm L}(x,k)=\begin{pmatrix} a(x,k) & b(x,k)\\ c(x,k) & d(x,k)\end{pmatrix}$ to the above equations, one can show that all the above imply exactly the same system of polynomial equations for $u$, $v$ and $w$, $x$, $y$ and $z$.
\end{proof}

Now, we act on the elements $\left(x,y,z,r,s,t\right)$ with the left and right part of the tetrahedron equation, as indicated in Figure \ref{Tet_eq-rep}. Specifically, using the right-hand side of the tetrahedron equation, we have
\begin{align*}
  T^{123}\left(x,y,z,r,s,t\right)&=\left(\tilde{x},\tilde{y},\tilde{z},r,s,t\right),\\
 T^{145}\circ T^{123}\left(x,y,z,r,s,t\right)&=\left(\tilde{\tilde{x}},\tilde{y},\tilde{z},\tilde{r},\tilde{s},t\right),\\
 T^{246}\circ  T^{145}\circ T^{123}\left(x,y,z,r,s,t\right)&=\left(\tilde{\tilde{x}},\tilde{\tilde{y}},\tilde{z},\tilde{\tilde{r}},\tilde{s},\tilde{t}\right),\\
 T^{356}\circ T^{246}\circ  T^{145}\circ T^{123}\left(x,y,z,r,s,t\right)&=\left(\tilde{\tilde{x}},\tilde{\tilde{y}},\tilde{\tilde{z}},\tilde{\tilde{r}},\tilde{\tilde{s}},\tilde{\tilde{t}}\right),
\end{align*}
whereas the left-hand side of the tetrahedron equation implies
\begin{align*}
  T^{356}\left(x,y,z,r,s,t\right)&=\left(x,y,\hat{z},r,\hat{s},\hat{t}\right),\\
  T^{246}\circ T^{356}\left(x,y,z,r,s,t\right)&=\left(x,\hat{y},\hat{z},\hat{r},\hat{s},\hat{\hat{t}}\right),\\
 T^{145}\circ T^{246}\circ T^{356}\left(x,y,z,r,s,t\right)&=\left(\hat{x},\hat{y},\hat{z},\hat{\hat{r}},\hat{\hat{s}},\hat{\hat{t}}\right),\\
  T^{123}\circ T^{145}\circ T^{246}\circ T^{356}\left(x,y,z,r,s,t\right)&=\left(\hat{\hat{x}},\hat{\hat{y}},\hat{\hat{z}},\hat{\hat{r}},\hat{\hat{s}},\hat{\hat{t}}\right),
\end{align*}
as in Figure \ref{Tet_eq-rep}. That is, if $\hat{\hat{x}}=\tilde{\tilde{x}}$, $\hat{\hat{y}}=\tilde{\tilde{y}}$, $\hat{\hat{z}}=\tilde{\tilde{z}}$, $\hat{\hat{r}}=\tilde{\tilde{r}}$, $\hat{\hat{s}}=\tilde{\tilde{s}}$ and $\hat{\hat{t}}=\tilde{\tilde{t}}$, then map \eqref{Tetrahedron_map} satisfies the tetrahedron equation and vice versa.

\begin{figure}[ht]
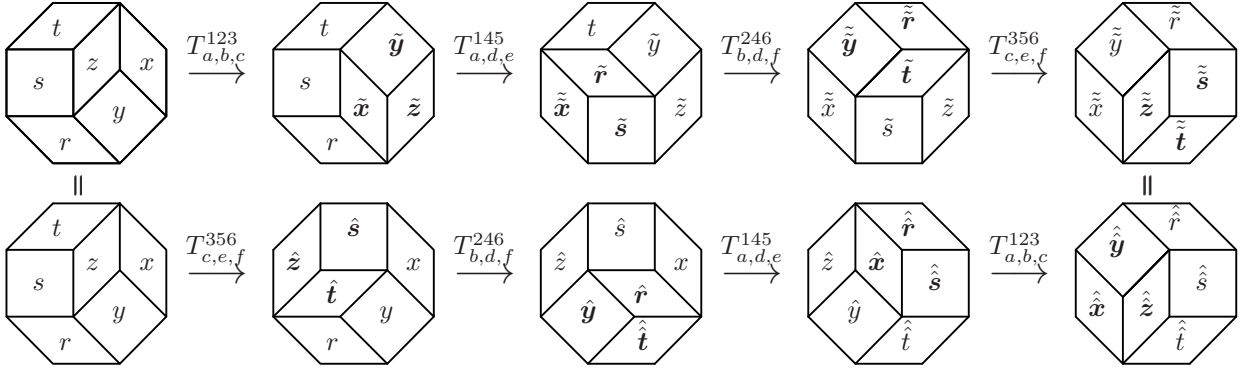

\centering
\centertexdraw{ 
\setunitscale .35

\move (0 0) \lvec (0 -1)\lvec(0.7 -1.7)\lvec(0.7 -0.7)\lvec(0 0)
\move (0 0) \lvec (-1 0)
\move (0 0)\lvec(-0.7 -0.7)\lvec(-0.7 -1.7)\lvec(0 -1)
\move (-0.7 -0.7)\lvec(-1.7 -0.7)\lvec(-1 0)
\move(-0.7 -1.7)\lvec(0 -2.4)\lvec(0.7 -1.7)
\move(-0.7 -1.7)\lvec(-1.7 -1.7)\lvec(-1.7 -0.7)
\move(0 -2.4)\lvec(-1 -2.4)\lvec(-1.7 -1.7)

\htext (-0.5 -1){{\small $z$}}
\htext (0.3 -1){{\small $x$}}
\htext (-0.1 -1.8){{\small $y$}}
\htext (-.9 -2.2){{\small $r$}}
\htext (-1.3 -1.3){{\small $s$}}
\htext (-1 -.5){{\small $t$}}

\htext (1 -1.2) {{\Large $\overset{T^{123}_{a,b,c}}{\longrightarrow}$}}

\move (4 0) \lvec(4.7 -0.7)\lvec(4.7 -1.7)
\move (4 0) \lvec (3 0)
\move (4 0)\lvec(3.3 -0.7)\lvec(3.3 -1.7)
\move(3.3 -0.7)\lvec(4 -1.4)\lvec(4.7 -0.7)
\move(4 -1.4)\lvec(4 -2.4)
\move (3.3 -0.7)\lvec(2.3 -0.7)\lvec(3 0)
\move(3.3 -1.7)\lvec(4 -2.4)\lvec(4.7 -1.7)
\move(3.3 -1.7)\lvec(2.3 -1.7)\lvec(2.3 -0.7)
\move(4 -2.4)\lvec(3 -2.4)\lvec(2.3 -1.7)

\htext (3.9 -0.8){{\small { $\tilde{\bm{y}}$}}}
\htext (3.5 -1.7){{\small {$\tilde{\bm{x}}$}}}
\htext (4.3 -1.7){{\small {$\tilde{\bm{z}}$}}}
\htext (3.1 -2.2){{\small $r$}}
\htext (2.7 -1.3){{\small $s$}}
\htext (3 -.5){{\small $t$}}

\htext (5 -1.2) {{\Large $\overset{T^{145}_{a,d,e}}{\longrightarrow}$}}

\move (8 0) \lvec(8.7 -0.7)\lvec(8.7 -1.7)
\move (8 0) \lvec (7 0)
\move (8 0)\lvec(7.3 -0.7)
\move(7.3 -0.7)\lvec(8 -1.4)\lvec(8.7 -0.7)
\move(8 -1.4)\lvec(7 -1.4)\lvec(7 -2.4)
\move(7 -1.4)\lvec(6.3 -0.7)
\move(8 -1.4)\lvec(8 -2.4)
\move (7.3 -0.7)\lvec(6.3 -0.7)\lvec(7 0)
\move(8 -2.4)\lvec(8.7 -1.7)
\move(6.3 -1.7)\lvec(6.3 -0.7)
\move(8 -2.4)\lvec(7 -2.4)\lvec(6.3 -1.7)

\htext (7.9 -0.8){{\small $\tilde{y}$}}
\htext (6.5 -1.7){{\small {$\tilde{\tilde{\bm{x}}}$}}}
\htext (8.3 -1.7){{\small $\tilde{z}$}}
\htext (7.1 -1.2){{\small {$\tilde{\bm{r}}$}}}
\htext (7.4 -2){{\small {$\tilde{\bm{s}}$}}}
\htext (7 -.5){{\small $t$}}

\htext (9 -1.2) {{\Large $\overset{T^{246}_{b,d,f}}{\longrightarrow}$}}

\move (12 0) \lvec(12.7 -0.7)\lvec(12.7 -1.7)
\move (12 0) \lvec (11 0)
\move(12 -1.4)\lvec(12.7 -0.7)\lvec(11.7 -0.7)\lvec(11 -1.4)
\lvec(11.7 -0.7)\lvec(11 0)
\move(12 -1.4)\lvec(11 -1.4)\lvec(11 -2.4)
\move(11 -1.4)\lvec(10.3 -0.7)
\move(12 -1.4)\lvec(12 -2.4)
\move(10.3 -0.7)\lvec(11 0)
\move(12 -2.4)\lvec(12.7 -1.7)
\move(10.3 -1.7)\lvec(10.3 -0.7)
\move(12 -2.4)\lvec(11 -2.4)\lvec(10.3 -1.7)

\htext (11.7 -0.4){{\small {$\tilde{\tilde{\bm{r}}}$}}}
\htext (10.5 -1.7){{\small $\tilde{\tilde{x}}$}}
\htext (12.3 -1.7){{\small $\tilde{z}$}}
\htext (11.7 -1.2){{\small {$\tilde{\bm{t}}$}}}
\htext (11.4 -2){{\small $\tilde{s}$}}
\htext (10.8 -.8){{\small  $\tilde{\tilde{\bm{y}}}$}}

\htext (13 -1.2) {{\Large $\overset{T^{356}_{c,e,f}}{\longrightarrow}$}}

\move (16 0) \lvec(16.7 -0.7)\lvec(16.7 -1.7)\lvec(15.7 -1.7)\lvec(15 -2.4)
\move(15.7 -1.7)\lvec(15.7 -0.7)
\move (16 0) \lvec (15 0)
\move(16.7 -0.7)\lvec(15.7 -0.7)\lvec(15 -1.4)
\lvec(15.7 -0.7)\lvec(15 0)
\move(15 -1.4)\lvec(15 -2.4)
\move(15 -1.4)\lvec(14.3 -0.7)
\move(14.3 -0.7)\lvec(15 0)
\move(16 -2.4)\lvec(16.7 -1.7)
\move(14.3 -1.7)\lvec(14.3 -0.7)
\move(16 -2.4)\lvec(15 -2.4)\lvec(14.3 -1.7)

\htext (15.7 -0.4){{\small $\tilde{\tilde{r}}$}}
\htext (14.5 -1.7){{\small $\tilde{\tilde{x}}$}}
\htext (16.1 -1.3){{\small $\tilde{\tilde{\bm{s}}}$}}
\htext (15.25 -1.7){{\small $\tilde{\tilde{\bm{z}}}$}}
\htext (15.8 -2.2){{\small $\tilde{\tilde{\bm{t}}}$}}
\htext (14.8 -.8){{\small $\tilde{\tilde{y}}$}}

\move (0 0) \lvec (0 -1)\lvec(0.7 -1.7)\lvec(0.7 -0.7)\lvec(0 0)
\move (0 0) \lvec (-1 0)
\move (0 0)\lvec(-0.7 -0.7)\lvec(-0.7 -1.7)\lvec(0 -1)
\move (-0.7 -0.7)\lvec(-1.7 -0.7)\lvec(-1 0)
\move(-0.7 -1.7)\lvec(0 -2.4)\lvec(0.7 -1.7)
\move(-0.7 -1.7)\lvec(-1.7 -1.7)\lvec(-1.7 -0.7)
\move(0 -2.4)\lvec(-1 -2.4)\lvec(-1.7 -1.7)

\move (0 -3) \lvec (0 -4)\lvec(0.7 -4.7)\lvec(0.7 -3.7)\lvec(0 -3)
\move (0 -3) \lvec (-1 -3)
\move (0 -3)\lvec(-0.7 -3.7)\lvec(-0.7 -4.7)\lvec(0 -4)
\move (-0.7 -3.7)\lvec(-1.7 -3.7)\lvec(-1 -3)
\move(-0.7 -4.7)\lvec(0 -5.4)\lvec(0.7 -4.7)
\move(-0.7 -4.7)\lvec(-1.7 -4.7)\lvec(-1.7 -3.7)
\move(0 -5.4)\lvec(-1 -5.4)\lvec(-1.7 -4.7)

\htext (-0.5 -4){{\small $z$}}
\htext (0.3 -4){{\small $x$}}
\htext (-0.1 -4.8){{\small $y$}}
\htext (-.9 -5.2){{\small $r$}}
\htext (-1.3 -4.3){{\small $s$}}
\htext (-1 -3.5){{\small $t$}}

\htext (1 -4.2) {{\Large $\overset{T^{356}_{c,e,f}}{\longrightarrow}$}}

\move (4 -3) \lvec (4 -4)\lvec(4.7 -4.7)\lvec(4.7 -3.7)\lvec(4 -3)
\move (4 -3) \lvec (3 -3)
\move (3.3 -4.7)\lvec(4 -4)\lvec(3 -4)\lvec(3 -3)
\move (3 -4)\lvec(2.3 -4.7)
\move (2.3 -3.7)\lvec(3 -3)
\move(3.3 -4.7)\lvec(4 -5.4)\lvec(4.7 -4.7)
\move(3.3 -4.7)\lvec(2.3 -4.7)\lvec(2.3 -3.7)
\move(4 -5.4)\lvec(3 -5.4)\lvec(2.3 -4.7)

\htext (3.1 -4.5){{\small $\hat{\bm{t}}$}}
\htext (4.3 -4){{\small $x$}}
\htext (3.9 -4.8){{\small $y$}}
\htext (3.1 -5.2){{\small $r$}}
\htext (2.5 -4){{\small  $\hat{\bm{z}}$}}
\htext (3.4 -3.5){{\small $\hat{\bm{s}}$}}

\htext (5 -4.2) {{\Large $\overset{T^{246}_{b,d,f}}{\longrightarrow}$}}

\move (8 -3) \lvec (8 -4)\lvec(8.7 -4.7)\lvec(8.7 -3.7)\lvec(8 -3)
\move (8 -3) \lvec (7 -3)
\move (8 -4)\lvec(7 -4)\lvec(7 -3)
\move (7 -4)\lvec(6.3 -4.7)
\move (6.3 -3.7)\lvec(7 -3)
\move(8 -5.4)\lvec(8.7 -4.7)
\move(6.3 -4.7)\lvec(6.3 -3.7)
\move(8 -5.4)\lvec(7 -5.4)\lvec(6.3 -4.7)
\move (8.7 -4.7)\lvec(7.7 -4.7)\lvec (7 -5.4)
\move (7.7 -4.7)\lvec(7 -4)

\htext (6.9 -4.8){{\small $\hat{\bm{y}}$}}
\htext (8.3 -4){{\small $x$}}
\htext (7.7 -4.5){{\small $\hat{\bm{r}}$}} 
\htext (7.75 -5.2){{\small $\hat{\hat{\bm{t}}}$}} 
\htext (6.5 -4){{\small $\hat{z}$}}
\htext (7.4 -3.5){{\small $\hat{s}$}}

\htext (9 -4.2) {{\Large $\overset{T^{145}_{a,d,e}}{\longrightarrow}$}}

\move (12.7 -4.7)\lvec(12.7 -3.7)\lvec(12 -3)
\move(12.7 -3.7)\lvec(11.7 -3.7)\lvec(11.7 -4.7)
\move(11.7 -3.7)\lvec(11 -3)
\move (12 -3) \lvec (11 -3)
\move(11 -4)\lvec(11 -3)
\move (11 -4)\lvec(10.3 -4.7)
\move (10.3 -3.7)\lvec(11 -3)
\move(12 -5.4)\lvec(12.7 -4.7)
\move(10.3 -4.7)\lvec(10.3 -3.7)
\move(12 -5.4)\lvec(11 -5.4)\lvec(10.3 -4.7)
\move (12.7 -4.7)\lvec(11.7 -4.7)\lvec (11 -5.4)
\move (11.7 -4.7)\lvec(11 -4)

\htext (10.9 -4.8){{\small $\hat{y}$}}
\htext (12.1 -4.3){{\small $\hat{\hat{\bm{s}}}$}} 
\htext (11.2 -4){{\small $\hat{\bm{x}}$}} 
\htext (11.7 -5.2){{\small $\hat{\hat{t}}$}} 
\htext (10.5 -4){{\small $\hat{z}$}}
\htext (11.7 -3.5){{\small $\hat{\hat{\bm{r}}}$}}

\htext (13 -4.2) {{\Large $\overset{T^{123}_{a,b,c}}{\longrightarrow}$}}

\move (16 -3) \lvec(16.7 -3.7)\lvec(16.7 -4.7)\lvec(15.7 -4.7)\lvec(15 -5.4)
\move(15.7 -4.7)\lvec(15.7 -3.7)
\move (16 -3) \lvec (15 -3)
\move(16.7 -3.7)\lvec(15.7 -3.7)\lvec(15 -4.4)
\lvec(15.7 -3.7)\lvec(15 -3)
\move(15 -4.4)\lvec(15 -5.4)
\move(15 -4.4)\lvec(14.3 -3.7)
\move(14.3 -3.7)\lvec(15 -3)
\move(16 -5.4)\lvec(16.7 -4.7)
\move(14.3 -4.7)\lvec(14.3 -3.7)
\move(16 -5.4)\lvec(15 -5.4)\lvec(14.3 -4.7)

\htext (15.7 -3.4){{\small $\hat{\hat{r}}$}}
\htext (14.5 -4.7){{\small $\hat{\hat{\bm{x}}}$}}
\htext (16.1 -4.3){{\small $\hat{\hat{s}}$}}
\htext (15.25 -4.7){{\small $\hat{\hat{\bm{z}}}$}}
\htext (15.8 -5.2){{\small $\hat{\hat{t}}$}}
\htext (14.8 -3.8){{\small $\hat{\hat{\bm{y}}}$}}

\vtext (15.5 -2.9){$\pmb{=}$}
\vtext (-.5 -2.9){$\pmb{=}$}
}
\caption{Tetrahedron equation. Schematic representation (\cite{Doliwa-Kashaev}).}
\label{Tet_eq-rep}
\end{figure}

\begin{theorem}\label{six-factorisation}
Let $T_{a,b}:=(x,y,z)\rightarrow (u_{a,b,c}(x,y,z),v_{a,b,c}(x,y,z),w_{a,b,c}(x,y,z))$ be a map satisfying the local Yang--Baxter equation
$$
{\rm L}^3_{12}(u,a) {\rm L}^3_{13}(v,b) {\rm L}_{23}^3(w,c)= {\rm L}^3_{23}(z,c) {\rm L}^3_{13}(y,b) {\rm L}^3_{12}(x,a),
$$
for some matrix ${\rm L}={\rm L}(x,a)$. Then, if for this Lax matrix ${\rm L}$ the following matrix \textit{six-factorisation} problem
\begin{align}\label{6-fac}
&{\rm L}^4_{34}(\hat{t},a_6){\rm L}^4_{24}(\hat{s},a_5){\rm L}^4_{14}(\hat{r},a_4){\rm L}^4_{23}(\hat{z},a_3){\rm L}^4_{13}(\hat{y},a_2){\rm L}^4_{12}(\hat{x},a_1)=\nonumber\\
&{\rm L}^4_{34}(t,a_6){\rm L}^4_{24}(s,a_5){\rm L}^4_{14}(r,a_4){\rm L}^4_{23}(z,a_3){\rm L}^4_{13}(y,a_2){\rm L}^4_{12}(x,a_1)
\end{align}
implies the trivial solution $\hat{t}=t$, $\hat{s}=s$, $\hat{r}=r$, $\hat{z}=z$, $\hat{y}=y$ and $\hat{x}=x$, then map  $T_{a,b,c}:=(x,y,z)\rightarrow (u_{a,b,c}(x,y,z),v_{a,b,c}(x,y,z),w_{a,b,c}(x,y,z))$ is a parametric tetrahedron map.
\end{theorem}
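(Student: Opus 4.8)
The plan is to reduce the parametric tetrahedron equation \eqref{Par-Tetrahedron-eq} to the matrix identity \eqref{6-fac}. Fix an arbitrary point $(x,y,z,r,s,t)\in\mathcal{X}^6$, distribute over the six slots the parameters $a,b,c,d,e,f$ exactly as in \eqref{Par-Tetrahedron-eq}, and attach to this state the $4\times4$ matrix product
\[
{\rm P}(x,y,z,r,s,t):={\rm L}^4_{34}(t,f)\,{\rm L}^4_{24}(s,e)\,{\rm L}^4_{14}(r,d)\,{\rm L}^4_{23}(z,c)\,{\rm L}^4_{13}(y,b)\,{\rm L}^4_{12}(x,a),
\]
which is precisely the right-hand side of \eqref{6-fac}; the hypothesis of the theorem is that the map $(x,y,z,r,s,t)\mapsto{\rm P}(x,y,z,r,s,t)$ is injective.

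The key observation is that each factor $T^{ijk}$ occurring in \eqref{Par-Tetrahedron-eq} can be realised as a \emph{local move} on such a product. Each of the maps $T^{123},T^{145},T^{246},T^{356}$ acts as $T$ on a triple of slots corresponding to a triple of index pairs forming a ``triangle'' in $\{1,2,3,4\}$; and, by Lemma \ref{3-eq-equiv} together with the analogous identity for the triple ${\rm L}^4_{13},{\rm L}^4_{14},{\rm L}^4_{34}$ (equivalent to the same polynomial system by the proof of that lemma), the local Yang--Baxter equation says precisely that applying $T$ to such a triple replaces the product of the corresponding three ${\rm L}^4$-matrices, carrying the current values of those slots, by the same three matrices written in the reversed order and carrying the updated values. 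By Lemma \ref{LijLji}, two ${\rm L}^4$-matrices with disjoint index sets commute, so the relevant three-factor block can always be made consecutive inside the product by admissible transpositions. Following the four maps of either side of \eqref{Par-Tetrahedron-eq} on $(x,y,z,r,s,t)$ while performing in parallel the four corresponding moves on ${\rm P}(x,y,z,r,s,t)$ is exactly the bookkeeping recorded in Figure \ref{Tet_eq-rep}.

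Carrying this out for the right-hand composition $\Phi:=T^{356}\circ T^{246}\circ T^{145}\circ T^{123}$, one checks step by step that each move is applicable (after the commutations of Lemma \ref{LijLji}) and that, after these moves, the matrix ${\rm P}(x,y,z,r,s,t)$ has been rewritten as ${\rm Q}\bigl(\Phi(x,y,z,r,s,t)\bigr)$, where
\[
{\rm Q}(x,y,z,r,s,t):={\rm L}^4_{12}(x,a)\,{\rm L}^4_{13}(y,b)\,{\rm L}^4_{23}(z,c)\,{\rm L}^4_{14}(r,d)\,{\rm L}^4_{24}(s,e)\,{\rm L}^4_{34}(t,f);
\]
likewise the left-hand composition $\Psi:=T^{123}\circ T^{145}\circ T^{246}\circ T^{356}$ rewrites ${\rm P}(x,y,z,r,s,t)$ as ${\rm Q}\bigl(\Psi(x,y,z,r,s,t)\bigr)$. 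Since every move and every commutation is an identity of matrices, this yields ${\rm Q}\circ\Phi={\rm P}={\rm Q}\circ\Psi$ as functions on $\mathcal{X}^6$. The maps $T^{ijk}$, hence $\Phi$, are invertible (as usual for maps generated by the local Yang--Baxter equation), so ${\rm Q}={\rm P}\circ\Phi^{-1}$; as ${\rm P}$ is injective by hypothesis, so is ${\rm Q}$. Therefore ${\rm Q}\circ\Phi={\rm Q}\circ\Psi$ forces $\Phi=\Psi$, which is exactly \eqref{Par-Tetrahedron-eq}; hence $T_{a,b,c}$ is a parametric tetrahedron map.

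The main obstacle is the combinatorial verification hidden in the third paragraph: one must check, slot by slot along each of the two chains, that the triple of ${\rm L}^4$-matrices to be acted upon is, after the transpositions permitted by Lemma \ref{LijLji}, a consecutive block, and that the partially updated six-tuples produced by the successive moves coincide with those displayed in Figure \ref{Tet_eq-rep}; essentially all the content of the argument lives here. The remaining points are routine: identifying the fourth trifactorisation form with Lemma \ref{3-eq-equiv}, observing via ${\rm Q}={\rm P}\circ\Phi^{-1}$ that injectivity of ${\rm P}$ transfers to ${\rm Q}$, and using that the maps entering \eqref{Par-Tetrahedron-eq} are invertible, as is the case for all the maps considered in this paper.
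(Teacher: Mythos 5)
Your overall strategy --- realising each factor $T^{ijk}$ of \eqref{Par-Tetrahedron-eq} as a local rewriting move on a six-fold product of the matrices \eqref{Lij4}, using the trifactorisation identities of Lemma \ref{3-eq-equiv} for the moves and Lemma \ref{LijLji} to make the relevant three factors consecutive, and then comparing the two factorisations produced by the two sides of the tetrahedron equation --- is exactly the paper's (compare \eqref{6-fac-left} and \eqref{6-fac-right}). Your observation that a fourth trifactorisation identity, for the triple ${\rm L}^4_{13},{\rm L}^4_{14},{\rm L}^4_{34}$, is needed beyond the three listed in Lemma \ref{3-eq-equiv} is correct and is a point the paper uses silently.

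There is, however, a genuine gap in how you close the argument. You orient the chains so that the reversed product ${\rm P}$ is rewritten into the forward product ${\rm Q}$ evaluated at the image, arriving at ${\rm Q}\circ\Phi={\rm P}={\rm Q}\circ\Psi$; to conclude $\Phi=\Psi$ you then need injectivity of ${\rm Q}$, which you extract from the hypothesis (injectivity of ${\rm P}$, i.e.\ uniqueness in \eqref{6-fac}) by writing ${\rm Q}={\rm P}\circ\Phi^{-1}$. But invertibility of $T$, hence of $\Phi$, is not among the hypotheses of the theorem, and it is not ``as usual'' for maps generated by the local Yang--Baxter equation: it fails for maps to which this very theorem is applied later in the paper. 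For instance, map \eqref{gauge-mKdV-map} sends both $\bm{y}_1$ and $\bm{z}_1$ to $\bm{z}_1$, so its image lies in the proper subset $\{\bm{v}_1=\bm{w}_1\}$ of $\mathfrak{R}^6$ and the map is neither surjective nor injective; map \eqref{1d-rel-map} is likewise non-invertible. Note also that injectivity of ${\rm Q}$ does not follow from injectivity of ${\rm P}$ by any formal manipulation (inverting ${\rm Q}$ reverses the order of the factors but also replaces each ${\rm L}^4_{ij}$ by its inverse, so one lands on the six-factorisation problem for ${\rm L}^{-1}$, not for ${\rm L}$). The repair is to orient the two rewriting chains as the paper does in \eqref{6-fac-left} and \eqref{6-fac-right}: both sides of the tetrahedron equation rewrite the common starting product into a product of the exact shape appearing in \eqref{6-fac}, carrying the twice-updated variables, so that the uniqueness hypothesis applies verbatim and no invertibility of $T$ is needed. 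Apart from this, your proposal defers the entire step-by-step verification of the two chains (which is where all the actual work in the paper's proof lies), but the plan for that verification is sound.
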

\begin{proof}
According to the right-hand side of the tetrahedron equation, we have:
\begin{align}\label{6-fac-left}
   \left[{\rm L}^4_{12}(x,a_1){\rm L}^4_{13}(y,a_2) {\rm L}^4_{23}(z,a_3)\right]&{\rm L}^4_{14}(r,a_4) {\rm L}^4_{24}(s,a_5) {\rm L}^4_{34}(t,a_6)=\nonumber\\
  & ={\rm L}^4_{23}(\tilde{z},a_3){\rm L}^4_{13}(\tilde{y},a_2)\left[{\rm L}^4_{12}(\tilde{x},a_1) {\rm L}^4_{14}(r,a_4){\rm L}^4_{24}(s,a_5)\right] {\rm L}^4_{34}(t,a_6)\nonumber\\
  & ={\rm L}^4_{23}(\tilde{z},a_3){\rm L}^4_{13}(\tilde{y},a_2) {\rm L}^4_{24}(\tilde{s},a_5) {\rm L}^4_{14}(\tilde{r},a_4){\rm L}^4_{12}(\tilde{\tilde{x}},a_1)  {\rm L}^4_{34}(t,a_6)\nonumber\\
   & ={\rm L}^4_{23}(\tilde{z},a_3) {\rm L}^4_{24}(\tilde{s},a_5)\left[ {\rm L}^4_{13}(\tilde{y},a_2) {\rm L}^4_{14}(\tilde{r},a_4) {\rm L}^4_{34}(t,a_6)\right] {\rm L}^4_{12}(\tilde{\tilde{x}},a_1)\nonumber\\
   & =\left[{\rm L}^4_{23}(\tilde{z},a_3) {\rm L}^4_{24}(\tilde{s},a_5) {\rm L}^4_{34}(\tilde{t},a_6)\right] {\rm L}^4_{14}(\tilde{\tilde{r}},a_4){\rm L}^4_{13}(\tilde{\tilde{y}},a_2) {\rm L}^4_{12}(\tilde{\tilde{x}},a_1)\nonumber\\
   & = {\rm L}^4_{34}(\tilde{\tilde{t}},a_6) {\rm L}^4_{24}(\tilde{\tilde{s}},a_5) {\rm L}^4_{23}(\tilde{\tilde{z}},a_3) {\rm L}^4_{14}(\tilde{\tilde{r}},a_4){\rm L}^4_{13}(\tilde{\tilde{y}},a_2) {\rm L}^4_{12}(\tilde{\tilde{x}},a_1)\nonumber\\
   & = {\rm L}^4_{34}(\tilde{\tilde{t}},a_6) {\rm L}^4_{24}(\tilde{\tilde{s}},a_5) {\rm L}^4_{14}(\tilde{\tilde{r}},a_4) {\rm L}^4_{23}(\tilde{\tilde{z}},a_3) {\rm L}^4_{13}(\tilde{\tilde{y}},a_2) {\rm L}^4_{12}(\tilde{\tilde{x}},a_1),
\end{align}
where we have used Lemmas \ref{LijLji} and \ref{3-eq-equiv}. On the other hand, using the left-hand side of the tetrahedron equation, we obtain:

\begin{align}\label{6-fac-right}
   {\rm L}^4_{12}(x,a_1){\rm L}^4_{13}(y,a_2) {\rm L}^4_{23}(z,a_3)&{\rm L}^4_{14}(r,a_4) {\rm L}^4_{24}(s,a_5) {\rm L}^4_{34}(t,a_6)\nonumber\\
  &={\rm L}^4_{12}(x,a_1){\rm L}^4_{13}(y,a_2) {\rm L}^4_{14}(r,a_4) \left[{\rm L}^4_{23}(z,a_3) {\rm L}^4_{24}(s,a_5) {\rm L}^4_{34}(t,a_6)\right]\nonumber\\
  &={\rm L}^4_{12}(x,a_1)\left[{\rm L}^4_{13}(y,a_2) {\rm L}^4_{14}(r,a_4) {\rm L}^4_{34}(\hat{t},a_6)\right] {\rm L}^4_{24}(\hat{s},a_5) {\rm L}^4_{23}(\hat{z},a_3)\nonumber \\
  &={\rm L}^4_{12}(x,a_1) {\rm L}^4_{34}(\hat{\hat{t}},a_6) {\rm L}^4_{14}(\hat{r},a_4) {\rm L}^4_{13}(\hat{y},a_2) {\rm L}^4_{24}(\hat{s},a_5) {\rm L}^4_{23}(\hat{z},a_3) \nonumber\\
   &={\rm L}^4_{34}(\hat{\hat{t}},a_6) \left[{\rm L}^4_{12}(x,a_1) {\rm L}^4_{14}(\hat{r},a_4) {\rm L}^4_{24}(\hat{s},a_5)\right] {\rm L}^4_{13}(\hat{y},a_2) {\rm L}^4_{23}(\hat{z},a_3)\nonumber \\
   &={\rm L}^4_{34}(\hat{\hat{t}},a_6) {\rm L}^4_{24}(\hat{\hat{s}},a_5) {\rm L}^4_{14}(\hat{\hat{r}},a_4) \left[{\rm L}^4_{12}(\hat{x},a_1)  {\rm L}^4_{13}(\hat{y},a_2) {\rm L}^4_{23}(\hat{z},a_3)\right] \nonumber\\
    &={\rm L}^4_{34}(\hat{\hat{t}},a_6) {\rm L}^4_{24}(\hat{\hat{s}},a_5) {\rm L}^4_{14}(\hat{\hat{r}},a_4) {\rm L}^4_{23}(\hat{\hat{z}},a_3) {\rm L}^4_{13}(\hat{\hat{y}},a_2) {\rm L}^4_{12}(\hat{\hat{x}},a_1).
\end{align}

The left-hand sides of \eqref{6-fac-left} and \eqref{6-fac-right} are equal, therefore we obtain
\begin{align*}
    &{\rm L}^4_{34}(\tilde{\tilde{t}},a_6) {\rm L}^4_{24}(\tilde{\tilde{s}},a_5) {\rm L}^4_{14}(\tilde{\tilde{r}},a_4) {\rm L}^4_{23}(\tilde{\tilde{z}},a_3) {\rm L}^4_{13}(\tilde{\tilde{y}},a_2) {\rm L}^4_{12}(\tilde{\tilde{x}},a_1)=\\
    &{\rm L}^4_{34}(\hat{\hat{t}},a_6) {\rm L}^4_{24}(\hat{\hat{s}},a_5) {\rm L}^4_{14}(\hat{\hat{r}},a_4) {\rm L}^4_{23}(\hat{\hat{z}},a_3) {\rm L}^4_{13}(\hat{\hat{y}},a_2) {\rm L}^4_{12}(\hat{\hat{x}},a_1)
\end{align*}
Now, if the above implies that $\hat{\hat{x}}=\tilde{\tilde{x}}$, $\hat{\hat{y}}=\tilde{\tilde{y}}$, $\hat{\hat{z}}=\tilde{\tilde{z}}$, $\hat{\hat{r}}=\tilde{\tilde{r}}$, $\hat{\hat{s}}=\tilde{\tilde{s}}$ and $\hat{\hat{t}}=\tilde{\tilde{t}}$, then map $T_{a,b}$ satisfies the tetrahedron equation.
\end{proof}

For example, the noncommutative Hirota map \cite{Doliwa-Kashaev}
\begin{align*}
\bm{x}\mapsto\, &\bm{u}=\bm{y}\bm{x}(\bm{x}+\bm{z})^{-1},\\
\bm{y}\mapsto \,&\bm{v}=\bm{x}+\bm{z},\\
\bm{z}\mapsto \,&\bm{w}=\bm{y}\bm{z}(\bm{x}+\bm{z})^{-1},
\end{align*}
which appears in \cite{Doliwa-Kashaev} and satisfies the local Yang--Baxter \eqref{Lax-Tetra} equation for ${\rm L}(\bm{x})=\begin{pmatrix}\bm{x} & 1\\ 1 & 0 \end{pmatrix}$.

It was proven in \cite{Doliwa-Kashaev} that the above map satisfies the tetrahedron property. Now, we prove it using Theorem \ref{six-factorisation}. Substituting the latter matrix to equation \eqref{6-fac}, we obtain the following system of equations in noncommutative variables
\begin{equation}\label{nHm-eqs}
    \hat{\bm{r}}\hat{\bm{y}}\hat{\bm{x}}=\bm{r}\bm{y}\bm{x},\quad \hat{\bm{r}}\hat{\bm{y}}=\bm{r}\bm{y},\quad \hat{\bm{s}}(\hat{\bm{z}}+\hat{\bm{x}})+\hat{\bm{y}}\hat{\bm{x}}=\bm{s}(\bm{z}+\bm{x})+\bm{y}\bm{x}, \quad \hat{\bm{t}}+\hat{\bm{z}}+\hat{\bm{x}}=\bm{t}+\bm{z}+\bm{x},\quad \hat{\bm{s}}+\hat{\bm{y}}=\bm{s}+\bm{y},
\end{equation}
as well as $\hat{\bm{r}}=\bm{r}$, $\hat{\bm{y}}=\bm{y}$ and $\hat{\bm{s}}=\bm{s}$ in view of which equations \eqref{nHm-eqs} imply $\hat{\bm{x}}=\bm{x}$, $\hat{\bm{z}}=\bm{z}$ and $\hat{\bm{t}}=\bm{t}$.

\begin{remark}\normalfont  As it is demonstrated in the above example, with the help of Theorem \ref{six-factorisation} we can prove that a map satisfies the tetrahedron equation without using the map, but just using its Lax representation.
\end{remark}

\subsection{Matrix six-factorisation and correspondences}
In this section, we demonstrate that the six-factorisation problem can be also used for correspondences which satisfy the local Yang--Baxter equation. As illustrative examples, we construct some novel tetrahedron maps which are generated by matrices related to the lattice modified KdV equation and also to one-dimensional relativistic elastic collision of two particles.

Let us start with the gauge transformation $\psi\mapsto\bar{\psi}={\rm V}\psi$, ${\rm V}={\rm V}(v,U)=\begin{pmatrix} \frac{1}{v} & 0\\ \frac{U}{v} & 1 \end{pmatrix}$, related to the lattice mKdV equation \cite{Frank-Walker}. In fact, $v$ is a solution to the lattice mKdV equation, while $U$ satisfies the lattice KdV equation \cite{Frank-Walker}. We change $(v,U)\rightarrow (\bm{x}_1,\bm{x}_2)$, $\bm{x}_i\in\mathfrak{R}$, $i=1,2$, in the former matrix  ${\rm V}$, namely we consider the matrix ${\rm V}(\bm{x}_1,\bm{x}_2)=\begin{pmatrix} \bm{x}_1^{-1} & 0\\ \bm{x}_2\bm{x}_1^{-1} & 1 \end{pmatrix}$.

We now define the $3\times 3$ generalisations of matrix  ${\rm V}(\bm{x}_1,\bm{x}_2)$, namely the following:
\begin{align*}
    &{\rm V}^3_{12}=\begin{pmatrix}\bm{x}_1^{-1} & 0 & 0  \\ \bm{x}_2\bm{x}_1^{-1} & 1 & 0  \\ 0 & 0 & 1 \end{pmatrix}, \quad
    {\rm V}^3_{13}=\begin{pmatrix}\bm{x}_1^{-1} & 0 & 0  \\ 0 & 1 & 0  \\ \bm{x}_2\bm{x}_1^{-1} & 0 & 1  \end{pmatrix},\quad
    {\rm V}^3_{23}=\begin{pmatrix}1 & 0 & 0  \\ 0 & \bm{x}_1^{-1} & 0  \\ 0 & \bm{x}_2\bm{x}_1^{-1} & 1 \end{pmatrix}.
\end{align*}
Substitution to the local Yang--Baxter equation \eqref{Lax-Tetra} implies the following correspondence
\begin{equation}\label{corr-gauge-mKdV}
    \bm{u}_1=\bm{v}_1^{-1}\bm{x}_1\bm{y}_1,\quad \bm{u}_2=\bm{z}_1^{-1}\bm{x}_2\bm{y}_1
,\quad \bm{v}_2=(\bm{z}_2\bm{z}_1^{-1}\bm{x}_2+\bm{y}_2\bm{y}_1^{-1})\bm{x}_1^{-1}\bm{v}_1,\quad \bm{w}_1=\bm{z}_1,\quad \bm{w}_2=\bm{z}_2.\end{equation}
This correspondence satisfies the local Yang--Baxter equation for any $\bm{v}_1\in\mathfrak{R}$; nevertheless, it does not define a tetrahedron map for arbitrary choice of $\bm{v}_1\in\mathfrak{R}$. 

However, we have the following.

\begin{proposition}
For $\bm{v}_1=\bm{z}_1$, correspondence \eqref{corr-gauge-mKdV} defines the map
\begin{equation}\label{gauge-mKdV-map}
    (\bm{x}_1,\bm{x}_2,\bm{y}_1,\bm{y}_2,\bm{z}_1,\bm{z}_2)\overset{T_{a,b,c}}{\longrightarrow }(\bm{z}_1^{-1}\bm{x}_1\bm{y}_1,\bm{z}_1^{-1}\bm{x}_2\bm{y}_1,\bm{z}_1,(\bm{z}_2\bm{z}_1^{-1}\bm{x}_2+\bm{y}_2\bm{y}_1^{-1})\bm{x}_1^{-1}\bm{z}_1,\bm{z}_1,\bm{z}_2).
\end{equation}
Map \eqref{gauge-mKdV-map} is a noninvolutive, noncommutative, parametric tetrahedron map.
\end{proposition}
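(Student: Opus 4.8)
The plan is to obtain the tetrahedron property from Theorem~\ref{six-factorisation}. First I would observe that the choice $\bm{v}_1=\bm{z}_1$ turns the correspondence \eqref{corr-gauge-mKdV} into a single-valued map, namely \eqref{gauge-mKdV-map}, and that, as observed in the text preceding the Proposition, this map satisfies the local Yang--Baxter equation \eqref{Lax-Tetra} with the Lax matrix ${\rm V}={\rm V}(\bm{x}_1,\bm{x}_2)=\begin{pmatrix}\bm{x}_1^{-1} & 0\\ \bm{x}_2\bm{x}_1^{-1} & 1\end{pmatrix}$. Hence, by Theorem~\ref{six-factorisation}, it suffices to show that the matrix six-factorisation problem \eqref{6-fac}, with ${\rm L}$ replaced by ${\rm V}$, forces the trivial solution.

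To do this, I would form the $4\times4$ generalisations ${\rm V}^4_{ij}$, $1\le i<j\le 4$, of ${\rm V}(\bm{x}_1,\bm{x}_2)$, built in the same pattern as the matrices ${\rm L}^4_{ij}$ in \eqref{Lij4} with $(a,b,c,d)$ replaced by $(\bm{x}_1^{-1},0,\bm{x}_2\bm{x}_1^{-1},1)$, and substitute them into \eqref{6-fac}. Expanding both six-fold products and comparing entries yields a system of equations in $\mathfrak{R}$ relating the hatted unknowns $(\hat{\bm{x}}_i,\hat{\bm{y}}_i,\hat{\bm{z}}_i,\hat{\bm{r}}_i,\hat{\bm{s}}_i,\hat{\bm{t}}_i)$, $i=1,2$, to $(\bm{x}_i,\bm{y}_i,\bm{z}_i,\bm{r}_i,\bm{s}_i,\bm{t}_i)$. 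As in the noncommutative Hirota example \eqref{nHm-eqs}, I expect a block of the matrix entries (those governed by the triangular, parameter-free shape of ${\rm V}$, i.e.\ the directions corresponding to $\bm{w}=\bm{z}$ in the map) to produce a set of ``clean'' identifications at once, after which the remaining equations are resolved one by one by multiplying on the left or right by the appropriate inverses and back-substituting, ending with $\hat{\bm{x}}_i=\bm{x}_i$, $\hat{\bm{y}}_i=\bm{y}_i$, $\hat{\bm{z}}_i=\bm{z}_i$, $\hat{\bm{r}}_i=\bm{r}_i$, $\hat{\bm{s}}_i=\bm{s}_i$, $\hat{\bm{t}}_i=\bm{t}_i$. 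This establishes the hypothesis of Theorem~\ref{six-factorisation}, so \eqref{gauge-mKdV-map} is a tetrahedron map; since ${\rm V}$ does not depend on a spectral parameter, the map is parameter-free and the parametric tetrahedron equation \eqref{Par-Tetrahedron-eq} reduces to \eqref{Tetrahedron-eq}, so it is also a parametric tetrahedron map.

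It then remains to verify the other two adjectives. For noninvolutivity I would compose \eqref{gauge-mKdV-map} with itself: since the first application sends $\bm{x}_1\mapsto\bm{z}_1^{-1}\bm{x}_1\bm{y}_1$, $\bm{y}_1\mapsto\bm{z}_1$ and $\bm{z}_1\mapsto\bm{z}_1$, the first component of $T\circ T$ equals $\bm{z}_1^{-1}(\bm{z}_1^{-1}\bm{x}_1\bm{y}_1)\bm{z}_1=\bm{z}_1^{-2}\bm{x}_1\bm{y}_1\bm{z}_1$, which does not equal $\bm{x}_1$ for generic arguments --- indeed it already fails in the commutative specialisation, where it reduces to $\bm{x}_1\bm{y}_1\bm{z}_1^{-1}$. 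Hence $T\circ T$ is not the identity map. Noncommutativity is automatic, since the map is defined over the division ring $\mathfrak{R}$ and the ordering of the factors in expressions such as $\bm{z}_1^{-1}\bm{x}_1\bm{y}_1$ and $(\bm{z}_2\bm{z}_1^{-1}\bm{x}_2+\bm{y}_2\bm{y}_1^{-1})\bm{x}_1^{-1}\bm{z}_1$ is essential, so $T$ does not arise from a commutative map.

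The main obstacle is the six-factorisation computation: expanding and comparing two six-fold products of $4\times4$ matrices whose entries are noncommutative and contain inverses, and then disentangling the resulting system without the use of commutativity. The bookkeeping is heavier than in the Hirota case \eqref{nHm-eqs}, but the lower-triangular shape of ${\rm V}$ keeps the matrix products sparse and should make transparent the sequence of left/right multiplications by inverses needed to isolate each hatted variable.
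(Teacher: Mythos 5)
There is a genuine gap at the heart of your argument: the hypothesis of Theorem~\ref{six-factorisation} is \emph{false} for the Lax matrix ${\rm V}(\bm{x}_1,\bm{x}_2)$, so the plan of ``substituting the ${\rm V}^4_{ij}$ into \eqref{6-fac} and showing it forces the trivial solution'' cannot be completed. You can see this in two ways. First, abstractly: the text immediately preceding the Proposition states that the correspondence \eqref{corr-gauge-mKdV} satisfies the local Yang--Baxter equation for \emph{every} choice of $\bm{v}_1$ but does \emph{not} define a tetrahedron map for every such choice; by the contrapositive of Theorem~\ref{six-factorisation}, the six-factorisation problem for ${\rm V}$ therefore cannot imply the trivial solution. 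Second, concretely: all six matrices ${\rm V}^4_{ij}$ are lower triangular with last column $e_4$, so the product ${\rm V}^4_{34}(\bm{t}){\rm V}^4_{24}(\bm{s}){\rm V}^4_{14}(\bm{r}){\rm V}^4_{23}(\bm{z}){\rm V}^4_{13}(\bm{y}){\rm V}^4_{12}(\bm{x})$ has at most $9$ nontrivial entries, against $12$ unknowns $\hat{\bm{x}}_i,\dots,\hat{\bm{t}}_i$; in particular the $(1,1)$ entry only pins down the product $\hat{\bm{r}}_1^{-1}\hat{\bm{y}}_1^{-1}\hat{\bm{x}}_1^{-1}=\bm{r}_1^{-1}\bm{y}_1^{-1}\bm{x}_1^{-1}$, not the individual factors. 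So no amount of bookkeeping will produce $\hat{\bm{x}}_1=\bm{x}_1$ from \eqref{6-fac} alone.

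What the paper actually does is re-run the chain of refactorisations from the \emph{proof} of Theorem~\ref{six-factorisation} while carrying along the explicit values produced by the map \eqref{gauge-mKdV-map} (in particular the facts $\bm{y}_1\mapsto\bm{z}_1$, $\bm{z}_1\mapsto\bm{z}_1$, $\bm{z}_2\mapsto\bm{z}_2$ and $\bm{x}_1\mapsto\bm{z}_1^{-1}\bm{x}_1\bm{y}_1$). Because of this special structure, when the two six-fold products coming from the two sides of the tetrahedron equation are finally equated, most of the first components (e.g.\ the arguments $\bm{t}_1$ of ${\rm V}^4_{34},{\rm V}^4_{24},{\rm V}^4_{14}$) already coincide on both sides, and the residual system \eqref{six-fac-Vij} in the remaining unknowns $\tilde{\bm{z}}_1,\tilde{\tilde{\bm{x}}}_i,\dots$ versus $\hat{\bm{z}}_1,\hat{\hat{\bm{x}}}_i,\dots$ \emph{does} force equality. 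In other words, one must verify a map-dependent, restricted version of the six-factorisation condition, not the map-independent one you propose. Your noninvolutivity and noncommutativity remarks are fine (your computation of the first component of $T\circ T$ as $\bm{z}_1^{-2}\bm{x}_1\bm{y}_1\bm{z}_1$ is in fact more careful than the paper's), but the tetrahedron property needs the refined argument above.
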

\begin{proof}
Map \eqref{gauge-mKdV-map} follows from \eqref{corr-gauge-mKdV} for $\bm{v}_1=\bm{z}_1\in\mathfrak{R}$.

Now, we define the $4\times 4$ generalisations of matrix  matrices ${\rm V}(\bm{x}_1,\bm{x}_2)$ as follows:
\begin{align*}
    &{\rm V}^4_{12}=\begin{pmatrix}\bm{x}_1^{-1} & 0 & 0 & 0 \\ \bm{x}_2\bm{x}_1^{-1} & 1 & 0 & 0 \\ 0 & 0 & 1 & 0 \\ 0 & 0 & 0 & 1\end{pmatrix}, \quad
    {\rm V}^4_{13}=\begin{pmatrix}\bm{x}_1^{-1} & 0 & 0 & 0 \\ 0 & 1 & 0 & 0 \\ \bm{x}_2\bm{x}_1^{-1} & 0 & 1 & 0 \\ 0 & 0 & 0 & 1\end{pmatrix},\quad
    {\rm V}^4_{23}=\begin{pmatrix}1 & 0 & 0 & 0 \\ 0 & \bm{x}_1^{-1} & 0 & 0 \\ 0 & \bm{x}_2\bm{x}_1^{-1} & 1 & 0 \\ 0 & 0 & 0 & 1\end{pmatrix},\\
    & {\rm V}^4_{14}=\begin{pmatrix}\bm{x}_1^{-1} & 0 & 0 & 0 \\ 0 & 1 & 0 & 0 \\ 0 & 0 & 1 & 0 \\ \bm{x}_2\bm{x}_1^{-1} & 0 & 0 & 1\end{pmatrix}, \quad
        {\rm V}^4_{24}=\begin{pmatrix}1 & 0 & 0 & 0 \\ 0 & \bm{x}_1^{-1} & 0 & 0 \\ 0 & 0 & 1 & 0 \\ 0 & \bm{x}_2\bm{x}_1^{-1} & 0 & 1\end{pmatrix}, \quad
    {\rm V}^4_{34}=\begin{pmatrix}1 & 0 & 0 & 0 \\ 0 & 1 & 0 & 0 \\ 0 & 0 & \bm{x}_1^{-1} & 0 \\ 0 & 0 & \bm{x}_2\bm{x}_1^{-1} & 1\end{pmatrix}.
\end{align*}

Regarding the tetrahedron property, using the left-hand side of the tetrahedron equation, and taking into account that $\bm{y}_1\mapsto\bm{z}_1, \bm{z}_1\mapsto\bm{z}_1$ and $\bm{z}_2\mapsto\bm{z}_2$ in \eqref{gauge-mKdV-map}, we have:
\begin{align}\label{6fac-gauge-mKdV-map-left}
   &{\rm V}^4_{12}(\bm{x}_1,\bm{x}_2,a_1){\rm V}^4_{13}(\bm{y}_1,\bm{y}_2,a_2) {\rm V}^4_{23}(\bm{z}_1,\bm{z}_2,a_3){\rm V}^4_{14}(\bm{r}_1,\bm{r}_2,a_4) {\rm V}^4_{24}(\bm{s}_1,\bm{s}_2,a_5) {\rm V}^4_{34}(\bm{t}_1,\bm{t}_2,a_6)=\nonumber\\
    &{\rm V}^4_{23}(\bm{z}_1,\bm{z}_2,a_3) {\rm V}^4_{13}(\bm{z}_1,\tilde{\bm{y}}_2,a_2) {\rm V}^4_{12}(\tilde{\bm{x}}_1,\tilde{\bm{x}}_2,a_1) {\rm V}^4_{14}(\bm{r}_1,\bm{r}_2,a_4) {\rm V}^4_{24}(\bm{s}_1,\bm{s}_2,a_5) {\rm V}^4_{34}(\bm{t}_1,\bm{t}_2,a_6)=\nonumber\\
    &{\rm V}^4_{23}(\bm{z}_1,\bm{z}_2,a_3) {\rm V}^4_{13}(\bm{z}_1,\tilde{\bm{y}}_2,a_2) {\rm V}^4_{24}(\bm{s}_1,\bm{s}_2,a_5) {\rm V}^4_{14}(\bm{s}_1,\tilde{\bm{r}}_2,a_4) {\rm V}^4_{12}(\tilde{\tilde{\bm{x}}}_1,\tilde{\tilde{\bm{x}}}_2,a_1)  {\rm V}^4_{34}(\bm{t}_1,\bm{t}_2,a_6)=\nonumber\\
    &{\rm V}^4_{23}(\bm{z}_1,\bm{z}_2,a_3) {\rm V}^4_{24}(\bm{s}_1,\bm{s}_2,a_5) {\rm V}^4_{13}(\bm{z}_1,\tilde{\bm{y}}_2,a_2) {\rm V}^4_{14}(\bm{s}_1,\tilde{\bm{r}}_2,a_4)  {\rm V}^4_{34}(\bm{t}_1,\bm{t}_2,a_6) {\rm V}^4_{12}(\tilde{\tilde{\bm{x}}}_1,\tilde{\tilde{\bm{x}}}_2,a_1)=\nonumber\\
     &{\rm V}^4_{23}(\bm{z}_1,\bm{z}_2,a_3) {\rm V}^4_{24}(\bm{s}_1,\bm{s}_2,a_5) {\rm V}^4_{34}(\bm{t}_1,\bm{t}_2,a_6) {\rm V}^4_{14}(\bm{t}_1,\tilde{\tilde{\bm{r}}}_2,a_4)  {\rm V}^4_{13}(\tilde{\bm{z}}_1,\tilde{\tilde{\bm{y}}}_2,a_2) {\rm V}^4_{12}(\tilde{\tilde{\bm{x}}}_1,\tilde{\tilde{\bm{x}}}_2,a_1)=\nonumber\\
      &{\rm V}^4_{34}(\bm{t}_1,\bm{t}_2,a_6) {\rm V}^4_{24}(\bm{t}_1,\tilde{\bm{s}}_2,a_5) {\rm V}^4_{23}(\tilde{\bm{z}}_1,\tilde{\bm{z}}_2,a_3) {\rm V}^4_{14}(\bm{t}_1,\tilde{\tilde{\bm{r}}}_2,a_4)  {\rm V}^4_{13}(\tilde{\bm{z}}_1,\tilde{\tilde{\bm{y}}}_2,a_2) {\rm V}^4_{12}(\tilde{\tilde{\bm{x}}}_1,\tilde{\tilde{\bm{x}}}_2,a_1)=\nonumber\\
       &{\rm V}^4_{34}(\bm{t}_1,\bm{t}_2,a_6) {\rm V}^4_{24}(\bm{t}_1,\tilde{\bm{s}}_2,a_5) {\rm V}^4_{14}(\bm{t}_1,\tilde{\tilde{\bm{r}}}_2,a_4)  {\rm V}^4_{23}(\tilde{\bm{z}}_1,\tilde{\bm{z}}_2,a_3) {\rm V}^4_{13}(\tilde{\bm{z}}_1,\tilde{\tilde{\bm{y}}}_2,a_2) {\rm V}^4_{12}(\tilde{\tilde{\bm{x}}}_1,\tilde{\tilde{\bm{x}}}_2,a_1).
\end{align}
On the other hand, using the left-hand side of the tetrahedron equation, and taking into account that $\bm{y}_1\mapsto\bm{z}_1, \bm{z}_1\mapsto\bm{z}_1$ and $\bm{z}_2\mapsto\bm{z}_2$ in \eqref{gauge-mKdV-map}, we obtain:
\begin{align}\label{6fac-gauge-mKdV-map-right}
   &{\rm V}^4_{12}(\bm{x}_1,\bm{x}_2,a_1){\rm V}^4_{13}(\bm{y}_1,\bm{y}_2,a_2) {\rm V}^4_{23}(\bm{z}_1,\bm{z}_2,a_3){\rm V}^4_{14}(\bm{r}_1,\bm{r}_2,a_4) {\rm V}^4_{24}(\bm{s}_1,\bm{s}_2,a_5) {\rm V}^4_{34}(\bm{t}_1,\bm{t}_2,a_6)=\nonumber\\
    &{\rm V}^4_{12}(\bm{x}_1,\bm{x}_2,a_1){\rm V}^4_{13}(\bm{y}_1,\bm{y}_2,a_2){\rm V}^4_{14}(\bm{r}_1,\bm{r}_2,a_4)  {\rm V}^4_{23}(\bm{z}_1,\bm{z}_2,a_3) {\rm V}^4_{24}(\bm{s}_1,\bm{s}_2,a_5) {\rm V}^4_{34}(\bm{t}_1,\bm{t}_2,a_6)=\nonumber\\
      &{\rm V}^4_{12}(\bm{x}_1,\bm{x}_2,a_1){\rm V}^4_{13}(\bm{y}_1,\bm{y}_2,a_2){\rm V}^4_{14}(\bm{r}_1,\bm{r}_2,a_4)  {\rm V}^4_{34}(\bm{t}_1,\bm{t}_2,a_6) {\rm V}^4_{24}(\bm{t}_1,\hat{\bm{s}}_2,a_5)  {\rm V}^4_{23}(\hat{\bm{z}}_1,\hat{\bm{z}}_2,a_3)=\nonumber\\
      &{\rm V}^4_{12}(\bm{x}_1,\bm{x}_2,a_1){\rm V}^4_{34}(\bm{t}_1,\bm{t}_2,a_6){\rm V}^4_{14}(\bm{t}_1,\hat{\bm{r}}_2,a_4){\rm V}^4_{13}(\hat{\bm{y}}_1,\hat{\bm{y}}_2,a_2) {\rm V}^4_{24}(\bm{t}_1,\hat{\bm{s}}_2,a_5)  {\rm V}^4_{23}(\hat{\bm{z}}_1,\hat{\bm{z}}_2,a_3)=\nonumber\\
      &{\rm V}^4_{34}(\bm{t}_1,\bm{t}_2,a_6){\rm V}^4_{12}(\bm{x}_1,\bm{x}_2,a_1){\rm V}^4_{14}(\bm{t}_1,\hat{\bm{r}}_2,a_4){\rm V}^4_{24}(\bm{t}_1,\hat{\bm{s}}_2,a_5){\rm V}^4_{13}(\hat{\bm{y}}_1,\hat{\bm{y}}_2,a_2)   {\rm V}^4_{23}(\hat{\bm{z}}_1,\hat{\bm{z}}_2,a_3)=\nonumber\\
       &{\rm V}^4_{34}(\bm{t}_1,\bm{t}_2,a_6){\rm V}^4_{24}(\bm{t}_1,\hat{\bm{s}}_2,a_5){\rm V}^4_{14}(\bm{t}_1,\hat{\hat{\bm{r}}}_2,a_4){\rm V}^4_{12}(\hat{\bm{x}}_1,\hat{\bm{x}}_2,a_1){\rm V}^4_{13}(\hat{\bm{y}}_1,\hat{\bm{y}}_2,a_2)   {\rm V}^4_{23}(\hat{\bm{z}}_1,\hat{\bm{z}}_2,a_3)=\nonumber\\
        &{\rm V}^4_{34}(\bm{t}_1,\bm{t}_2,a_6){\rm V}^4_{24}(\bm{t}_1,\hat{\bm{s}}_2,a_5){\rm V}^4_{14}(\bm{t}_1,\hat{\hat{\bm{r}}}_2,a_4){\rm V}^4_{23}(\hat{\bm{z}}_1,\hat{\bm{z}}_2,a_3){\rm V}^4_{13}(\hat{\bm{z}}_1,\hat{\hat{\bm{y}}}_2,a_2){\rm V}^4_{12}(\hat{\hat{\bm{x}}}_1,\hat{\hat{\bm{x}}}_2,a_1).
\end{align}

From the relations \eqref{6fac-gauge-mKdV-map-left} and \eqref{6fac-gauge-mKdV-map-right} it follows that
\begin{align*}
    &{\rm V}^4_{34}(\bm{t}_1,\bm{t}_2,a_6) {\rm V}^4_{24}(\bm{t}_1,\tilde{\bm{s}}_2,a_5) {\rm V}^4_{14}(\bm{t}_1,\tilde{\tilde{\bm{r}}}_2,a_4)  {\rm V}^4_{23}(\tilde{\bm{z}}_1,\tilde{\bm{z}}_2,a_3) {\rm V}^4_{13}(\tilde{\bm{z}}_1,\tilde{\tilde{\bm{y}}}_2,a_2) {\rm V}^4_{12}(\tilde{\tilde{\bm{x}}}_1,\tilde{\tilde{\bm{x}}}_2,a_1)=\\
    &{\rm V}^4_{34}(\bm{t}_1,\bm{t}_2,a_6){\rm V}^4_{24}(\bm{t}_1,\hat{\bm{s}}_2,a_5){\rm V}^4_{14}(\bm{t}_1,\hat{\hat{\bm{r}}}_2,a_4){\rm V}^4_{23}(\hat{\bm{z}}_1,\hat{\bm{z}}_2,a_3){\rm V}^4_{13}(\hat{\bm{z}}_1,\hat{\hat{\bm{y}}}_2,a_2){\rm V}^4_{12}(\hat{\hat{\bm{x}}}_1,\hat{\hat{\bm{x}}}_2,a_1),
\end{align*}
which implies the following system
\begin{subequations}\label{six-fac-Vij}
\begin{align}
    &\tilde{\bm{z}}_1=\hat{\bm{z}}_1,\quad \bm{t}_1\tilde{\bm{z}}_1\tilde{\tilde{\bm{x}}}_1=\bm{t}_1\hat{\bm{z}}_1\hat{\hat{\bm{x}}}_1,\quad \bm{t}_1^{-1}\tilde{\bm{z}}_1\tilde{\tilde{x}}_2\tilde{\tilde{x}}_1^{-1}=\bm{t}_1^{-1}\hat{\bm{z}}_1\hat{\hat{x}}_2\hat{\hat{x}}_1^{-1}, \quad \bm{t}_1^{-1}\tilde{\bm{z}}_2\bm{\tilde{z}}_1^{-1}=\bm{t}_1^{-1}\hat{\bm{z}}_2\bm{\hat{z}}_1^{-1},\label{six-fac-Vij-a}\\
    &\bm{t}_1^{-1}(\tilde{\bm{z}}_2\tilde{\bm{z}}_1^{-1}\tilde{\tilde{\bm{x}}}_2\tilde{\tilde{x}}_1^{-1}+\tilde{\tilde{y}}_2\tilde{z}_1^{-1}\tilde{\tilde{x}}_1)=\bm{t}_1^{-1}(\hat{\bm{z}}_2\hat{\bm{z}}_1^{-1}\hat{\hat{\bm{x}}}_2\hat{\hat{x}}_1^{-1}+\hat{\hat{y}}_2\hat{z}_1^{-1}\hat{\hat{x}}_1),\label{six-fac-Vij-b}\\
   & \bm{t}_2\bm{t}_1^{-1}\tilde{\bm{z}}_2\tilde{\bm{z}}_1^{-1}+\tilde{\bm{s}}_2\bm{t}_1^{-1}\tilde{\bm{z}}_1=\bm{t}_2\bm{t}_1^{-1}\hat{\bm{z}}_2\hat{\bm{z}}_1^{-1}+\hat{\bm{s}}_2\bm{t}_1^{-1}\hat{\bm{z}}_1,\label{six-fac-Vij-c}\\
   &\bm{t}_2\bm{t}_1^{-1}(\tilde{\bm{z}}_2\tilde{\bm{z}}_1^{-1}\tilde{\tilde{\bm{x}}}_2\tilde{\tilde{\bm{x}}}_1^{-1}+\tilde{\tilde{\bm{y}}}_2\tilde{\bm{z}}_1^{-1}\tilde{\tilde{\bm{x}}}_1)+\tilde{\bm{s}}_2\bm{t}_1^{-1}\tilde{\bm{z}}_1\tilde{\tilde{\bm{x}}}_2\tilde{\tilde{\bm{x}}}_1^{-1}+\tilde{\tilde{\bm{r}}}_2\bm{t}_1^{-1}\tilde{\bm{z}}_1\tilde{\tilde{x}}_1=\nonumber\\
    &\bm{t}_2\bm{t}_1^{-1}(\hat{\bm{z}}_2\hat{\bm{z}}_1^{-1}\hat{\hat{\bm{x}}}_2\hat{\hat{\bm{x}}}_1^{-1}+\hat{\hat{\bm{y}}}_2\hat{\bm{z}}_1^{-1}\hat{\hat{\bm{x}}}_1)+\hat{\bm{s}}_2\bm{t}_1^{-1}\hat{\bm{z}}_1\hat{\hat{\bm{x}}}_2\hat{\hat{\bm{x}}}_1^{-1}+\hat{\hat{\bm{r}}}_2\bm{t}_1^{-1}\hat{\bm{z}}_1\hat{\hat{x}}_1.\label{six-fac-Vij-d}
\end{align}
\end{subequations}
Since $\tilde{\bm{z}}_1=\hat{\bm{z}}_1$, from the rest of \eqref{six-fac-Vij-a} follows that $\tilde{\tilde{\bm{x}}}_1=\hat{\hat{\bm{x}}}_1$, $\tilde{\tilde{\bm{x}}}_2=\hat{\hat{\bm{x}}}_2$ and $\tilde{\bm{z}}_2=\hat{\bm{z}}_2$. Given that, \eqref{six-fac-Vij-b} and \eqref{six-fac-Vij-c} imply $\tilde{\tilde{\bm{y}}}_2=\hat{\hat{\bm{y}}}_2$ and $\tilde{\bm{s}}_2=\hat{\bm{s}}_2$, respectively, in view of which from \eqref{six-fac-Vij-d} we obtain that $\tilde{\tilde{\bm{r}}}_2=\hat{\hat{\bm{r}}}_2$. Therefore, map \eqref{gauge-mKdV-map} is a parametric tetrahedron map.

The noninvolutivity of map \eqref{gauge-mKdV-map} follows, for instance, from $\bm{u}_1\circ T_{a,b,c}=\bm{x}_1\bm{y}_1\bm{z}_1^{-1}\neq\bm{x}_1$.
\end{proof}

Now, we will derive another novel, noncommutative, parametric tetrahedron map generated by the matrix ${\rm L}_{\lambda}(\bm{x}_{1},\bm{x}_{2})=\begin{pmatrix}\lambda & k\bm{x}_1\\ k\bm{x}_2^{-1} & \lambda \end{pmatrix}$, where $\bm{x}_{i}\in\mathfrak{R}$, $i=1,2$, and $k\in Z(\mathfrak{R})$. For $\bm{x}_1=\bm{x}_2=x$, this matrix is a Lax matrix for a Yang--Baxter map which was derived from a system of one-dimensional relativistic elastic collision of two particles in \cite{Kouloukas}.

We define the matrices
\begin{align*}
    &{\rm L}^3_{0,12}=\begin{pmatrix}0 & k\bm{x}_1 & 0 \\ k\bm{x}_2^{-1} & 0 & 0  \\ 0 & 0 & 1 \end{pmatrix}, \quad
    {\rm L}^3_{0,13}=\begin{pmatrix}0 & 0 & k\bm{x}_1 \\ 0 & 1 & 0 \\ k\bm{x}_2^{-1} & 0 & 0 \end{pmatrix},\quad
    {\rm L}^3_{0,23}=\begin{pmatrix}1 & 0 & 0 \\ 0 & 0 & k\bm{x}_1  \\ 0 & k\bm{x}_2^{-1} & 0 \end{pmatrix}.
\end{align*}
Substitution to the local Yang--Baxter equation \eqref{Lax-Tetra} implies the following correspondence
\begin{equation}\label{relativ-corr}
    \bm{v}_1=\bm{u}_2\bm{z}_1\bm{y}_2^{-1}\bm{x}_1\bm{w}_2,\quad \bm{v}_2=a^{-1}bc^{-1}\bm{x}_2\bm{z}_2,\quad \bm{w}_1 =a^{-1}bc^{-1}\bm{u}_1^{-1}\bm{y}_1.
\end{equation}
The above correspondence satisfies the local Yang--Baxter equation for arbitrary $\bm{u}_1, \bm{u}_2, \bm{w}_2\in\mathfrak{R}$. But, not for any choice of the latter the associated map satisfies the tetrahedron equation.

We have the following.

\begin{proposition}
Correspondence \eqref{relativ-corr} defines the map
\begin{equation}\label{1d-rel-map} 
    (\bm{x}_1,\bm{x}_2,\bm{y}_1,\bm{y}_2,\bm{z}_1,\bm{z}_2)\overset{T_{a,b,c}}{\longrightarrow }(\bm{x}_1,\bm{x}_2,\bm{x}_2\bm{z}_1\bm{y}_2^{-1}\bm{x}_1\bm{z}_2,a^{-1}bc^{-1}\bm{x}_2\bm{z}_2,a^{-1}bc^{-1}\bm{x}_1^{-1}\bm{y}_1,\bm{z}_2).
\end{equation}
Map \eqref{1d-rel-map} is a noninvolutive, noncommutative, parametric tetrahedron map.
\end{proposition}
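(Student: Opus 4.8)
The plan is to follow exactly the two-step strategy used for the previous proposition. The first step records \eqref{1d-rel-map} as a single-valued branch of the correspondence \eqref{relativ-corr}. By construction \eqref{relativ-corr} solves the local Yang--Baxter equation \eqref{Lax-Tetra}, built from the $3\times 3$ extensions ${\rm L}^3_{0,ij}$ of ${\rm L}_\lambda$, for every choice of $\bm{u}_1,\bm{u}_2,\bm{w}_2\in\mathfrak{R}$; picking the constant branch $\bm{u}_1=\bm{x}_1$, $\bm{u}_2=\bm{x}_2$, $\bm{w}_2=\bm{z}_2$ and substituting into the three remaining relations of \eqref{relativ-corr} reproduces term by term the six output components of \eqref{1d-rel-map}. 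The resulting assignment is a well-defined map on the locus of $\mathfrak{R}^6$ where $\bm{x}_1$ and $\bm{y}_2$ are nonzero, hence invertible in the division ring $\mathfrak{R}$.

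For the tetrahedron property I would invoke Theorem \ref{six-factorisation}: since \eqref{1d-rel-map} satisfies \eqref{Lax-Tetra}, it is a parametric tetrahedron map provided the matrix six-factorisation problem \eqref{6-fac} for the Lax matrix ${\rm L}={\rm L}_\lambda$ forces the trivial solution. I would form the $4\times 4$ analogues ${\rm L}^4_{0,ij}$ of the matrices ${\rm L}^3_{0,ij}$, patterned on \eqref{Lij4}, and then reproduce, step by step, the two cascades \eqref{6fac-gauge-mKdV-map-left} and \eqref{6fac-gauge-mKdV-map-right}. On each side the product of six matrices is carried into the mirrored order by alternately commuting disjoint factors (Lemma \ref{LijLji}) and applying the triple-reversal identity ${\rm L}_{12}{\rm L}_{13}{\rm L}_{23}={\rm L}_{23}{\rm L}_{13}{\rm L}_{12}$, or one of its two companions supplied by Lemma \ref{3-eq-equiv}, in combination with the local Yang--Baxter equation obeyed by the map; at every such move one must record the induced $T_{a,b,c}$-transformation of the relevant triple, remembering that $T_{a,b,c}$ leaves the first variable intact and preserves the second entry of the third variable. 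Equating the two fully reordered products and comparing entries yields, exactly as in \eqref{six-fac-Vij}, a system of noncommutative relations between the doubly-tilded and the doubly-hatted unknowns.

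The crux is then to solve that system and deduce $\hat{\hat{q}}=\tilde{\tilde{q}}$ for each of the twelve noncommutative unknowns $q$. Because the map copies the first variable outright and keeps the last component fixed, several tilde/hat pairs are identified at once; cascading these identifications through the remaining equations should pin down $\tilde{\bm{z}}_1=\hat{\bm{z}}_1$ and the two $\bm{x}$-slot entries, then the $\bm{y}$- and $\bm{s}$-slot entries, and finally the $\bm{r}$- and $\bm{t}$-slot entries, every step needing only left and right multiplication by units of $\mathfrak{R}$ together with the centrality of $k$, hence of the parameters $a,b,c$. Theorem \ref{six-factorisation} then gives the parametric tetrahedron property. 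Noninvolutivity is immediate: the fifth component of $T_{a,b,c}\circ T_{a,b,c}$ equals $a^{-1}bc^{-1}\bm{x}_1^{-1}\bm{x}_2\bm{z}_1\bm{y}_2^{-1}\bm{x}_1\bm{z}_2$, which differs from $\bm{z}_1$ in general.

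The main obstacle I anticipate is the noncommutative bookkeeping inside the two cascades: unlike the commutative situation, the intermediate variables produced when $T$ is applied at different positions must be tracked with care, which is precisely the role of the proliferating $\tilde{\phantom{q}}$ and $\tilde{\tilde{\phantom{q}}}$ decorations, and the subsequent disentangling of the six-factorisation system must be organised so that at each stage the next unknown can be isolated by multiplying by genuine units of $\mathfrak{R}$ — a point where the inverses $\bm{x}_2^{-1}$ appearing inside ${\rm L}_\lambda$ demand particular attention.
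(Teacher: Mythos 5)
Your proposal follows the paper's proof essentially verbatim: fix $\bm{u}_1=\bm{x}_1$, $\bm{u}_2=\bm{x}_2$, $\bm{w}_2=\bm{z}_2$ in the correspondence, verify the six-factorisation condition of Theorem \ref{six-factorisation} via the two cascades with the $4\times 4$ matrices ${\rm L}^4_{0,ij}$, and exhibit noninvolutivity by a direct component computation (you use the fifth component, the paper the third — both work). The only remark worth adding is that the final disentangling is even easier than your generic cascade suggests: since the map fixes $\bm{x}_1,\bm{x}_2,\bm{z}_2$ and determines the other outputs explicitly, the six-factorisation collapses to three relations of the form $(\tilde{\tilde{q}}-\hat{\hat{q}})\cdot(\text{product of units})=0$, which in the division ring $\mathfrak{R}$ immediately give the trivial solution.
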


\begin{proof}
Obviously, map \eqref{1d-rel-map} is derived from correspondence \eqref{relativ-corr} for the choice $\bm{u}_1=\bm{x}_1, \bm{u}_2=\bm{x}_2$ and $\bm{w}_2=\bm{z}_2$.

Regarding the tetrahedron property, we first define the $4\times 4$ generalisations of matrix ${\rm L}_{\lambda}(\bm{x}_{1},\bm{x}_{2})$: 
\begin{align*}
    &{\rm L}^4_{0,12}=\begin{pmatrix}0 & k\bm{x}_1 & 0 & 0 \\ k\bm{x}_2^{-1} & 0 & 0 & 0 \\ 0 & 0 & 1 & 0 \\ 0 & 0 & 0 & 1\end{pmatrix}, \quad
    {\rm L}^4_{0,13}=\begin{pmatrix}0 & 0 & k\bm{x}_1 & 0 \\ 0 & 1 & 0 & 0 \\ k\bm{x}_2^{-1} & 0 & 0 & 0 \\ 0 & 0 & 0 & 1\end{pmatrix},\quad
    {\rm L}^4_{0,23}=\begin{pmatrix}1 & 0 & 0 & 0 \\ 0 & 0 & k\bm{x}_1 & 0 \\ 0 & k\bm{x}_2^{-1} & 0 & 0 \\ 0 & 0 & 0 & 1\end{pmatrix},\\
    \\
   & {\rm L}^4_{0,14}=\begin{pmatrix}0 & 0 & 0 & k\bm{x}_1 \\ 0 & 1 & 0 & 0 \\ 0 & 0 & 1 & 0 \\ k\bm{x}_2^{-1} & 0 & 0 & 0\end{pmatrix}, \quad
        {\rm L}^4_{0,24}=\begin{pmatrix}1 & 0 & 0 & 0 \\ 0 & 0 & 0 & k\bm{x}_1 \\ 0 & 0 & 1 & 0 \\ 0 & k\bm{x}_2^{-1} & 0 & 0\end{pmatrix}, \quad
    {\rm L}^4_{0,34}=\begin{pmatrix}1 & 0 & 0 & 0 \\ 0 & 1 & 0 & 0 \\ 0 & 0 & 0 & k\bm{x}_1 \\ 0 & 0 & k\bm{x}_2^{-1} & 0\end{pmatrix}.
\end{align*}

Now, using the left-hand side of the tetrahedron equation, and taking into account \eqref{1d-rel-map}, we have:{\small
\begin{align}\label{6fac-1d-rel-left}
   &{\rm L}^4_{0,12}(\bm{x}_1,\bm{x}_2,a_1){\rm L}^4_{0,13}(\bm{y}_1,\bm{y}_2,a_2) {\rm L}^4_{0,23}(\bm{z}_1,\bm{z}_2,a_3){\rm L}^4_{0,14}(\bm{r}_1,\bm{r}_2,a_4) {\rm L}^4_{0,24}(\bm{s}_1,\bm{s}_2,a_5) {\rm L}^4_{0,34}(\bm{t}_1,\bm{t}_2,a_6)=\nonumber\\
   &{\rm L}^4_{0,23}(a_1^{-1}a_2a_3^{-1}\bm{x}_1^{-1}\bm{y}_1,\bm{z}_2,a_3) {\rm L}^4_{0,13}(\bm{x}_2\bm{z}_1\bm{y}_2^{-1},a_1^{-1}a_2a_3^{-1}\bm{x}_2\bm{z}_2,a_2) {\rm L}^4_{0,12}(\bm{x}_1,\bm{x}_2,a_1) \cdot\nonumber\\
   &{\rm L}^4_{0,14}(\bm{r}_1,\bm{r}_2,a_4) {\rm L}^4_{0,24}(\bm{s}_1,\bm{s}_2,a_5) {\rm L}^4_{0,34}(\bm{t}_1,\bm{t}_2,a_6)=\nonumber\\
   &{\rm L}^4_{0,23}(a_1^{-1}a_2a_3^{-1}\bm{x}_1^{-1}\bm{y}_1,\bm{z}_2,a_3) {\rm L}^4_{0,13}(\bm{x}_2\bm{z}_1\bm{y}_2^{-1},a_1^{-1}a_2a_3^{-1}\bm{x}_2\bm{z}_2,a_2) {\rm L}^4_{0,24}(\tilde{\bm{s}}_1,\bm{s}_2,a_5)\cdot\nonumber\\
   &{\rm L}^4_{0,14}(\tilde{\bm{r}}_1,a_1^{-1}a_4a_5^{-1}\bm{x}_2\bm{s}_2,a_4) {\rm L}^4_{0,12}(\bm{x}_1,\bm{x}_2,a_1) {\rm L}^4_{0,34}(\bm{t}_1,\bm{t}_2,a_6)=\nonumber\\
    &{\rm L}^4_{0,23}(a_1^{-1}a_2a_3^{-1}\bm{x}_1^{-1}\bm{y}_1,\bm{z}_2,a_3) {\rm L}^4_{0,24}(\tilde{\bm{s}}_1,\bm{s}_2,a_5) {\rm L}^4_{0,13}(\bm{x}_2\bm{z}_1\bm{y}_2^{-1},a_1^{-1}a_2a_3^{-1}\bm{x}_2\bm{z}_2,a_2)\cdot\nonumber\\
   &{\rm L}^4_{0,14}(\tilde{\bm{r}}_1,a_1^{-1}a_4a_5^{-1}\bm{x}_2\bm{s}_2,a_4) {\rm L}^4_{0,34}(\bm{t}_1,\bm{t}_2,a_6){\rm L}^4_{0,12}(\bm{x}_1,\bm{x}_2,a_1) =\nonumber\\
   &{\rm L}^4_{0,23}(a_1^{-1}a_2a_3^{-1}\bm{x}_1^{-1}\bm{y}_1,\bm{z}_2,a_3) {\rm L}^4_{0,24}(\tilde{\bm{s}}_1,\bm{s}_2,a_5) {\rm L}^4_{0,34}(\tilde{\bm{t}}_1,\bm{t}_2,a_6)\cdot\nonumber\\
   &{\rm L}^4_{0,14}(\tilde{\tilde{\bm{r}}}_1,a_1^{-1}a_3^{-1}a_4a_6^{-1}\bm{x}_2\bm{z}_2\bm{t}_2,a_4) {\rm L}^4_{0,13}(\bm{x}_2\bm{z}_1\bm{y}_2^{-1}\bm{x}_1\bm{z}_2,a_1^{-1}a_2a_3^{-1}\bm{x}_2\bm{z}_2,a_2){\rm L}^4_{0,12}(\bm{x}_1,\bm{x}_2,a_1) =\nonumber\\
   &{\rm L}^4_{0,34}(\tilde{\tilde{\bm{t}}}_1,\bm{t}_2,a_6) {\rm L}^4_{0,24}(\tilde{\tilde{\bm{s}}}_1,a_3^{-1}a_5a_6^{-1}\bm{z}_2\bm{t}_2,a_5) {\rm L}^4_{0,23}(a_1^{-1}a_2a_3^{-1}\bm{x}_1^{-1}\bm{y}_1,\bm{z}_2,a_3)\cdot\nonumber\\
   &{\rm L}^4_{0,14}(\tilde{\tilde{\bm{r}}}_1,a_1^{-1}a_3^{-1}a_4a_6^{-1}\bm{x}_2\bm{z}_2\bm{t}_2,a_4) {\rm L}^4_{0,13}(\bm{x}_2\bm{z}_1\bm{y}_2^{-1}\bm{x}_1\bm{z}_2,a_1^{-1}a_2a_3^{-1}\bm{x}_2\bm{z}_2,a_2){\rm L}^4_{0,12}(\bm{x}_1,\bm{x}_2,a_1) =\nonumber\\
    &{\rm L}^4_{0,34}(\tilde{\tilde{\bm{t}}}_1,\bm{t}_2,a_6) {\rm L}^4_{0,24}(\tilde{\tilde{\bm{s}}}_1,a_3^{-1}a_5a_6^{-1}\bm{z}_2\bm{t}_2,a_5){\rm L}^4_{0,14}(\tilde{\tilde{\bm{r}}}_1,a_1^{-1}a_3^{-1}a_4a_6^{-1}\bm{x}_2\bm{z}_2\bm{t}_2,a_4) \cdot\nonumber\\
   &{\rm L}^4_{0,23}(a_1^{-1}a_2a_3^{-1}\bm{x}_1^{-1}\bm{y}_1,\bm{z}_2,a_3) {\rm L}^4_{0,13}(\bm{x}_2\bm{z}_1\bm{y}_2^{-1}\bm{x}_1\bm{z}_2,a_1^{-1}a_2a_3^{-1}\bm{x}_2\bm{z}_2,a_2){\rm L}^4_{0,12}(\bm{x}_1,\bm{x}_2,a_1).
\end{align}
}
On the other hand, using the right-hand side of the tetrahedron equation, and taking into account \eqref{1d-rel-map}, after a few steps we obtain:{\small
\begin{align}\label{6fac-1d-rel-right}
   &{\rm L}^4_{0,12}(\bm{x}_1,\bm{x}_2,a_1){\rm L}^4_{0,13}(\bm{y}_1,\bm{y}_2,a_2) {\rm L}^4_{0,23}(\bm{z}_1,\bm{z}_2,a_3){\rm L}^4_{0,14}(\bm{r}_1,\bm{r}_2,a_4) {\rm L}^4_{0,24}(\bm{s}_1,\bm{s}_2,a_5) {\rm L}^4_{0,34}(\bm{t}_1,\bm{t}_2,a_6)=\nonumber\\
    &{\rm L}^4_{0,34}(\hat{\hat{\bm{t}}}_1,\bm{t}_2,a_6) {\rm L}^4_{0,24}(\hat{\hat{\bm{s}}}_1,a_3^{-1}a_5a_6^{-1}\bm{z}_2\bm{t}_2,a_5){\rm L}^4_{0,14}(\hat{\hat{\bm{r}}}_1,a_1^{-1}a_3^{-1}a_4a_6^{-1}\bm{x}_2\bm{z}_2\bm{t}_2,a_4) \cdot\nonumber\\
   &{\rm L}^4_{0,23}(a_1^{-1}a_2a_3^{-1}\bm{x}_1^{-1}\bm{y}_1,\bm{z}_2,a_3) {\rm L}^4_{0,13}(\bm{x}_2\bm{z}_1\bm{y}_2^{-1}\bm{x}_1\bm{z}_2,a_1^{-1}a_2a_3^{-1}\bm{x}_2\bm{z}_2,a_2){\rm L}^4_{0,12}(\bm{x}_1,\bm{x}_2,a_1).
\end{align}
}

From equations \eqref{6fac-1d-rel-left} and \eqref{6fac-1d-rel-right} follows that
\begin{align}\label{6fac-1d-rel-sys}
   &{\rm L}^4_{0,34}(\tilde{\tilde{\bm{t}}}_1,\bm{t}_2,a_6) {\rm L}^4_{0,24}(\tilde{\tilde{\bm{s}}}_1,a_3^{-1}a_5a_6^{-1}\bm{z}_2\bm{t}_2,a_5){\rm L}^4_{0,14}(\tilde{\tilde{\bm{r}}}_1,a_1^{-1}a_3^{-1}a_4a_6^{-1}\bm{x}_2\bm{z}_2\bm{t}_2,a_4) \cdot\nonumber\\
   &{\rm L}^4_{0,23}(a_1^{-1}a_2a_3^{-1}\bm{x}_1^{-1}\bm{y}_1,\bm{z}_2,a_3) {\rm L}^4_{0,13}(\bm{x}_2\bm{z}_1\bm{y}_2^{-1}\bm{x}_1\bm{z}_2,a_1^{-1}a_2a_3^{-1}\bm{x}_2\bm{z}_2,a_2){\rm L}^4_{0,12}(\bm{x}_1,\bm{x}_2,a_1)=\nonumber\\
    &{\rm L}^4_{0,34}(\hat{\hat{\bm{t}}}_1,\bm{t}_2,a_6) {\rm L}^4_{0,24}(\hat{\hat{\bm{s}}}_1,a_3^{-1}a_5a_6^{-1}\bm{z}_2\bm{t}_2,a_5){\rm L}^4_{0,14}(\hat{\hat{\bm{r}}}_1,a_1^{-1}a_3^{-1}a_4a_6^{-1}\bm{x}_2\bm{z}_2\bm{t}_2,a_4) \cdot\nonumber\\
   &{\rm L}^4_{0,23}(a_1^{-1}a_2a_3^{-1}\bm{x}_1^{-1}\bm{y}_1,\bm{z}_2,a_3) {\rm L}^4_{0,13}(\bm{x}_2\bm{z}_1\bm{y}_2^{-1}\bm{x}_1\bm{z}_2,a_1^{-1}a_2a_3^{-1}\bm{x}_2\bm{z}_2,a_2){\rm L}^4_{0,12}(\bm{x}_1,\bm{x}_2,a_1),
\end{align}
which implies
$$
(\tilde{\tilde{\bm{s}}}_1-\hat{\hat{\bm{s}}}_1)(\bm{x}_2\bm{z}_2\bm{t}_2)^{-1}\bm{x}_2\bm{z}_1\bm{y}_2^{-1}\bm{x}_1\bm{z}_2=0,\quad (\tilde{\tilde{\bm{t}}}_1-\hat{\hat{\bm{t}}}_1)(\bm{z}_2\bm{t}_2)^{-1}\bm{x}_1^{-1}\bm{y}_1(\bm{x}_2\bm{z}_2)^{-1}\bm{x}_1,\quad a_4(\tilde{\tilde{\bm{r}}}_1-\hat{\hat{\bm{r}}}_1)=0,
$$
from where it follows that $\tilde{\tilde{\bm{s}}}_1=\hat{\hat{\bm{s}}}_1$, $\tilde{\tilde{\bm{t}}}_1=\hat{\hat{\bm{t}}}_1$ and $\tilde{\tilde{\bm{r}}}_1=\hat{\hat{\bm{r}}}_1$. That is, map \eqref{1d-rel-map} is a parametric tetrahedron map.

Map \eqref{1d-rel-map} is noninvolutive, since, for instance, $\bm{v}_1\circ T_{a,b,c}=\bm{x}_2\bm{x}_1^{-1}\bm{y}_1\bm{z}_2^{-1}\bm{x}_2^{-1}\bm{x}_1\bm{z}_2\neq\bm{y}_1$.
\end{proof}

\begin{remark}\normalfont
In this section, we saw how can tetrahedron maps be derived from correspondences which satisfy the local Yang--Baxter equation by fixing their free variables. The natural question arises as to whether the choice of the variables we made is unique. The answer is negative: as it was demonstrated in \cite{Sokor-2020}, different tetrahedron maps can be derived for various choices of the free variables. However, a correspondence satisfying the local Yang--Baxter equation does not define a tetrahedron map for arbitrary choice of the free variables. To check whether a choice of the free variable is `correct' one needs to verify whether the matrix six-factorisation problem \eqref{6-fac} implies the trivial solution. There is no systematic way of finding the particular values that define tetrahedron maps. 
\end{remark}

\section{Noncommutative NLS type tetrahedron maps}\label{NLS-noncomm_tetrahedron_maps}
A method for constructing parametric tetrahedron maps using Darboux transformations was presented in \cite{Sokor-2020}. In this section, we employ a noncommutative Darboux matrix derived in \cite{Sokor-Xenitidis} for the NLS equation in order to derive a noncommutative parametric tetrahedron map via the local Yang--Baxter equation. 

\subsection{A noncommutative NLS type Darboux transformation}
Consider the spatial part of the Lax pair of the noncommutative NLS system \cite{Sokolov}, namely the operator:
\begin{equation}\label{Lax-NLS-NC}
    \mathcal{L}=D_x-{\rm U}(\bm{p},\bm{q};\lambda),\quad {\rm{U}}=\lambda\begin{pmatrix}
   1 & 0\\
   0 & -1
\end{pmatrix}
+
\begin{pmatrix}
   0 & 2\bm{p}\\
   2\bm{q} & 0
\end{pmatrix}.
\end{equation}

The definition of a Darboux transformation for such operators is a transformation which leaves covariant the operator, namely
$$
\rm{M}\big(\mathcal{D}_x+\rm{U}(\rm{U};\lambda)\big)\rm{M}^{-1}=
\mathcal{D}_x-\rm{U}({\rm{U}_{10}};\lambda),
$$
where $\rm{M}$ is an invertible matrix called a \textit{Darboux matrix}. A Darboux matrix related to operator \eqref{Lax-NLS-NC} is the following \cite{Sokor-Xenitidis}:
\begin{equation}\label{DT-10-deg}
{\rm{M}}(\bm{f},\bm{p};a)=\lambda\begin{pmatrix}
1 & 0 \\
0 & 0
\end{pmatrix}
+
\begin{pmatrix}
\bm{f} & \bm{p} \\
a\bm{p}^{-1} & 0
\end{pmatrix}, 
\quad \bm{f}=\frac{1}{2}\bm{p}_x\bm{p}^{-1}.
\end{equation}

\subsection{A noncommutative NLS type tetahedron map}
We change $(\bm{f}+\lambda,\bm{p})\rightarrow (\bm{x}_1,\bm{x}_2)$, $\bm{x}_i\in\mathfrak{R}$, $i=1,2$, in \eqref{DT-10-deg}, namely we consider the matrix
\begin{equation}\label{DT-NLS}
{\rm{M}}(\bm{x}_1,\bm{x}_2;a)=
\begin{pmatrix}
\bm{x}_1 & \bm{x}_2 \\
a\bm{x}^{-1}_2 & 0
\end{pmatrix},
\end{equation}
where $a\in Z(\mathfrak{R})$.

We employ this matrix to construct a noncommutative tetrahedron map. We define the following matrices
{\small $$
  M^3_{12}(\bm{x}_1,\bm{x}_2;a)=\begin{pmatrix} 
 \bm{x}_1 &  \bm{x}_2 & 0\\ 
a\bm{x}_2^{-1} &  0 & 0\\
0 & 0 & 1
\end{pmatrix},\quad
 M^3_{13}(\bm{x}_1,\bm{x}_2;a)= \begin{pmatrix} 
 \bm{x}_1 & 0 & \bm{x}_2\\ 
0 & 1 & 0\\
a\bm{x}_2^{-1} & 0 & 0
\end{pmatrix}, \quad
 M^3_{23}(\bm{x}_1,\bm{x}_2;a)=\begin{pmatrix} 
   1 & 0 & 0 \\
0 & \bm{x}_1 & \bm{x}_2\\ 
0 & a\bm{x}_2^{-1} & 0
\end{pmatrix},
$$}
and substitute them to the local Yang--Baxter equation
$$
    M^3_{12}(\bm{u}_1,\bm{u}_2;a)M^3_{13}(\bm{v}_1,\bm{v}_2;b)M^3_{23}(\bm{w}_1,\bm{w}_2;c)= M^3_{23}(\bm{z}_1,\bm{z}_2;c)M^3_{13}(\bm{y}_1,\bm{y}_2;b)M^3_{12}(\bm{x}_1,\bm{x}_2;a).
$$

The above implies the system of equations
\begin{align*}
    & \bm{u}_1\bm{v}_1=\bm{y}_1\bm{x}_1,\quad c\bm{u}_1\bm{v}_2\bm{w}_2^{-1}+\bm{u}_2\bm{w}_1=\bm{y}_1\bm{x}_2,\quad \bm{u}_2\bm{w}_2=\bm{y}_2,\\
    &a\bm{u}_2^{-1}\bm{v}_1=a\bm{z}_1\bm{x}_2^{-1}+b\bm{z}_2\bm{y}_2^{-1}\bm{x}_1,\quad ac\bm{u}_2^{-1}\bm{v}_2\bm{w}_2^{-1}=b\bm{z}_2\bm{y}_2^{-1}\bm{x}_2,\\ 
    & b\bm{v}_2^{-1}=ca\bm{z}_2^{-1}\bm{x}_2^{-1}.
\end{align*}
The above system can be solved to give the following correspondence
\begin{align*}
    &\bm{u}_2=a\bm{u}_1^{-1}\bm{y}_1\bm{x}_1 (a\bm{z}_1\bm{x}_2^{-1}+b\bm{z}_2\bm{y}_2^{-1}\bm{x}_1)^{-1},\quad \bm{v}_1=\bm{u}_1^{-1}\bm{y}_1\bm{x}_1,\quad \bm{v}_2=a^{-1}bc^{-1}\bm{x}_2\bm{z}_2,\\
    &\bm{w}_1=a^{-1}(a\bm{z}_1\bm{x}_2^{-1}+b\bm{z}_2\bm{y}_2^{-1}\bm{x}_1)\bm{x}_1^{-1}\bm{y}_1^{-1}\bm{u}_1\left[\bm{y}_1\bm{x}_2-b\bm{y}_1\bm{x}_1(a\bm{z}_1\bm{x}_2^{-1}+b\bm{z}_2\bm{y}_2^{-1}\bm{x}_1)^{-1}\bm{z}_2\bm{y}_2^{-1}\bm{x}_2\right],
\end{align*}
which is a solution to the local Yang--Baxter equation for any $\bm{u}_1\in\mathfrak{R}$.

For the choice $\bm{u}_1=\bm{y}_1$ the above correspondence defines the map
\begin{align}\label{fully-noncomm-NLS}
    &\bm{u}_1=\bm{y}_1,\nonumber\\
    &\bm{u}_2=a\bm{x}_1 (a\bm{z}_1\bm{x}_2^{-1}+b\bm{z}_2\bm{y}_2^{-1}\bm{x}_1)^{-1},\nonumber\\ 
    &\bm{v}_1=\bm{x}_1,\\ 
    &\bm{v}_2=a^{-1}bc^{-1}\bm{x}_2\bm{z}_2,\nonumber\\
    &\bm{w}_1=a^{-1}(a\bm{z}_1\bm{x}_2^{-1}+b\bm{z}_2\bm{y}_2^{-1}\bm{x}_1)\bm{x}_1^{-1}\left[\bm{y}_1\bm{x}_2-b\bm{y}_1\bm{x}_1(a\bm{z}_1\bm{x}_2^{-1}+b\bm{z}_2\bm{y}_2^{-1}\bm{x}_1)^{-1}\bm{z}_2\bm{y}_2^{-1}\bm{x}_2\right],\nonumber\\
    &\bm{w}_2=a^{-1}(a\bm{z}_1\bm{x}_2^{-1}+b\bm{z}_2\bm{y}_2^{-1}\bm{x}_1)\bm{x}_1^{-1}\bm{y}_2,\nonumber
\end{align}
which is a noncommutative version of the map (29) in \cite{Sokor-2020}; nevertheless, it does not share the same invariants with (29) due to noncommutativity. However, if one makes the centrality assumption (see results on noncommutative Yang--Baxter maps where the centrality assumption was employed \cite{Doliwa-2014, Kassotakis-Kouloukas}), then the above map has three invariants in seperable variables.

In particular, we have the following.

\begin{proposition}
Let $\bm{x}_1,\bm{y}_1,\bm{z}_1\in Z(\mathfrak{R})$. Then, \eqref{fully-noncomm-NLS} implies the following map
\begin{subequations}\label{noncomm-NLS}
\begin{align}
    \bm{x}_1\mapsto \bm{u}_1&=\bm{y}_1,\\
    \bm{x}_2\mapsto \bm{u}_2&=a\bm{x}_1 (a\bm{z}_1\bm{x}_2^{-1}+b\bm{z}_2\bm{y}_2^{-1}\bm{x}_1)^{-1},\\ 
    \bm{y}_1\mapsto \bm{v}_1&=\bm{x}_1,\\ 
    \bm{y}_2\mapsto\bm{v}_2&=a^{-1}bc^{-1}\bm{x}_2\bm{z}_2,\\
    \bm{z}_1\mapsto\bm{w}_1&=\bm{x}_1^{-1}\bm{y}_1\bm{z}_1,\\
    \bm{z}_2\mapsto\bm{w}_2&=a^{-1}(a\bm{z}_1\bm{x}_2^{-1}+b\bm{z}_2\bm{y}_2^{-1}\bm{x}_1)\bm{x}_1^{-1}\bm{y}_2.
\end{align}
\end{subequations}
Moreover, map \eqref{noncomm-NLS} admits the following invariants
$$
I_1=\bm{x}_1\bm{y}_1,\quad I_2=\bm{x}_1+\bm{y}_1,\quad I_3=\bm{y}_1\bm{z}_1.
$$
Finally, map \eqref{noncomm-NLS} is a noncommutative parametric tetrahedron map.
\end{proposition}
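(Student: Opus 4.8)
The proof splits into three parts: deriving \eqref{noncomm-NLS} from \eqref{fully-noncomm-NLS}, checking the three invariants, and establishing the tetrahedron property.

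For the first part I would substitute the centrality hypotheses $\bm{x}_1,\bm{y}_1,\bm{z}_1\in Z(\mathfrak{R})$ into the component $\bm{w}_1$ of \eqref{fully-noncomm-NLS}. Abbreviating $A:=a\bm{z}_1\bm{x}_2^{-1}+b\bm{z}_2\bm{y}_2^{-1}\bm{x}_1$, that formula reads $\bm{w}_1=a^{-1}A\bm{x}_1^{-1}[\bm{y}_1\bm{x}_2-b\bm{y}_1\bm{x}_1A^{-1}\bm{z}_2\bm{y}_2^{-1}\bm{x}_2]$. Pulling the central factor $\bm{x}_1^{-1}\bm{y}_1$ to the front, using $A\bm{x}_1A^{-1}=\bm{x}_1$ (since $\bm{x}_1$ is central), and noting $A\bm{x}_2-b\bm{x}_1\bm{z}_2\bm{y}_2^{-1}\bm{x}_2=a\bm{z}_1$, one obtains $\bm{w}_1=a^{-1}\bm{x}_1^{-1}\bm{y}_1(a\bm{z}_1)=\bm{x}_1^{-1}\bm{y}_1\bm{z}_1$, which is exactly the fifth component of \eqref{noncomm-NLS}; the remaining components are literally unchanged. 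The invariants then follow by direct evaluation on \eqref{noncomm-NLS}: $\bm{u}_1\bm{v}_1=\bm{y}_1\bm{x}_1=\bm{x}_1\bm{y}_1$, $\bm{u}_1+\bm{v}_1=\bm{y}_1+\bm{x}_1$, and $\bm{v}_1\bm{w}_1=\bm{x}_1\bm{x}_1^{-1}\bm{y}_1\bm{z}_1=\bm{y}_1\bm{z}_1$, each identity using only the centrality of $\bm{x}_1,\bm{y}_1,\bm{z}_1$.

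For the tetrahedron property, note that \eqref{noncomm-NLS} satisfies the local Yang--Baxter equation with Lax matrix \eqref{DT-NLS} (it is the $\bm{u}_1=\bm{y}_1$ specialisation of the correspondence, restricted by the centrality conditions), so Theorem \ref{six-factorisation} applies. First I would write down the $4\times4$ extensions $M^4_{ij}$, $1\le i<j\le4$, obtained by inserting the block of \eqref{DT-NLS} into the $(i,j)$ slots and identities elsewhere, in complete analogy with \eqref{Lij4}. Then I would expand both sides of the six-factorisation problem \eqref{6-fac} for ${\rm L}=M$, reordering the factors by Lemma \ref{LijLji} and eliminating the $12$-, $14$- and $23$-type subproducts through the local Yang--Baxter equation by Lemma \ref{3-eq-equiv}, tracking the propagation of the variables precisely as in the chains \eqref{6-fac-left}--\eqref{6-fac-right} and in the proofs of the two preceding propositions; here each `variable' is a pair $(\bm{x}_1^{(i)},\bm{x}_2^{(i)})$ whose first component is central. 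Equating the two normal-ordered products yields a closed system of equations in the doubly-twiddled and doubly-hatted components, from which I would conclude $\tilde{\tilde{\bm{x}}}_1=\hat{\hat{\bm{x}}}_1$, $\tilde{\tilde{\bm{x}}}_2=\hat{\hat{\bm{x}}}_2$ and the analogous identities for the $y,z,r,s,t$ pairs, so that the hypothesis of Theorem \ref{six-factorisation} is met and \eqref{noncomm-NLS} is a parametric tetrahedron map.

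The main obstacle is the last step: because the Lax matrix \eqref{DT-NLS} has a vanishing $(2,2)$ entry, the six-factorisation products are partly degenerate, so one must verify that the resulting system still determines all twelve components rather than leaving a residual freedom. In practice the central first components drop out of the `triangular' part of the matrix identity right away, and the second components are then recovered by back-substitution; checking that this elimination actually closes — and that the centrality hypotheses are precisely what is needed for it to close — is where the genuine computation lies.
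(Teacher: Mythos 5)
Your proposal is correct and follows essentially the same route as the paper: the simplification of $\bm{w}_1$ to $\bm{x}_1^{-1}\bm{y}_1\bm{z}_1$ under centrality, the direct check of $I_1,I_2,I_3$, and the tetrahedron property via Theorem \ref{six-factorisation} with the $4\times4$ extensions ${\rm M}^4_{ij}$ of \eqref{DT-NLS} are exactly the paper's steps. The final elimination that you only sketch does close in the paper: the degenerate six-factorisation system yields $\tilde{\tilde{\bm{r}}}_2=\hat{\hat{\bm{r}}}_2$, $\bm{x}_1\bm{y}_1\tilde{\tilde{\bm{x}}}_2=\bm{x}_1\bm{y}_1\hat{\hat{\bm{x}}}_2$ and $\bm{x}_1\tilde{\tilde{\bm{y}}}_2=\bm{x}_1\hat{\hat{\bm{y}}}_2$ first, and back-substitution then determines the remaining components exactly as you anticipate.
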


\begin{proof}
If $\bm{x}_1,\bm{y}_1,\bm{z}_1\in Z(\mathfrak{R})$, then $$\bm{w}_1=a^{-1}(a\bm{z}_1\bm{x}_2^{-1}+b\bm{z}_2\bm{y}_2^{-1}\bm{x}_1)\bm{x}_1^{-1}\left[\bm{y}_1\bm{x}_2-b\bm{y}_1\bm{x}_1(a\bm{z}_1\bm{x}_2^{-1}+b\bm{z}_2\bm{y}_2^{-1}\bm{x}_1)^{-1}\bm{z}_2\bm{y}_2^{-1}\bm{x}_2\right]=\bm{x}_1^{-1}\bm{y}_1\bm{z}_1.
$$
Moreover, $\bm{u}_1\bm{v}_1=\bm{y}_1\bm{x}_1=\bm{x}_1\bm{y}_1$, $\bm{u}_1+\bm{v}_1=\bm{x}_1+\bm{y}_1$ and $\bm{v}_1\bm{w}_1=\bm{y}_1\bm{z}_1$.

Concerning the tetrahedron property, we first define the $4\times 4$ extensions of matrix ${\rm M}(\bm{x}_1,\bm{x}_2;a)$ in \eqref{DT-NLS}, i.e.:
\begin{align*}
        &{\rm M}^4_{12}=\begin{pmatrix}\bm{x}_1 & \bm{x}_2 & 0 & 0 \\ a\bm{x}_2^{-1} & 0 & 0 & 0 \\ 0 & 0 & 1 & 0 \\ 0 & 0 & 0 & 1\end{pmatrix}, \quad
    {\rm M}^4_{13}=\begin{pmatrix}\bm{x}_1 & 0 & \bm{x}_2 & 0 \\ 0 & 1 & 0 & 0 \\ a\bm{x}_2^{-1} & 0 & 0 & 0 \\ 0 & 0 & 0 & 1\end{pmatrix},\quad
        {\rm M}^4_{23}=\begin{pmatrix}1 & 0 & 0 & 0 \\ 0 & \bm{x}_1 & \bm{x}_2 & 0 \\ 0 & a\bm{x}_2^{-1} & 0 & 0 \\ 0 & 0 & 0 & 1\end{pmatrix}\nonumber\\
   & {\rm M}^4_{14}=\begin{pmatrix}\bm{x}_1 & 0 & 0 & \bm{x}_2 \\ 0 & 1 & 0 & 0 \\ 0 & 0 & 1 & 0 \\ a\bm{x}_2^{-1} & 0 & 0 & 0\end{pmatrix}, \quad
        {\rm M}^4_{24}=\begin{pmatrix}1 & 0 & 0 & 0 \\ 0 & \bm{x}_1 & 0 & \bm{x}_2 \\ 0 & 0 & 1 & 0 \\ 0 & a\bm{x}_2^{-1} & 0 & 0\end{pmatrix}, \quad
    {\rm M}^4_{34}=\begin{pmatrix}1 & 0 & 0 & 0 \\ 0 & 1 & 0 & 0 \\ 0 & 0 & \bm{x}_1 & \bm{x}_2 \\ 0 & 0 & a\bm{x}_2^{-1} & 0\end{pmatrix}.
\end{align*}

Using the left-hand side of the tetrahedron equation, and taking into account \eqref{noncomm-NLS}, we have:
\begin{align}\label{6fac-NLS-left}
   &{\rm M}^4_{12}(\bm{x}_1,\bm{x}_2,a_1){\rm M}^4_{13}(\bm{y}_1,\bm{y}_2,a_2) {\rm M}^4_{23}(\bm{z}_1,\bm{z}_2,a_3){\rm M}^4_{14}(\bm{r}_1,\bm{r}_2,a_4) {\rm M}^4_{24}(\bm{s}_1,\bm{s}_2,a_5) {\rm M}^4_{34}(\bm{t}_1,\bm{t}_2,a_6)=\nonumber\\
    &{\rm M}^4_{23}(\bm{x}_1^{-1}\bm{y}_1\bm{z}_1,\tilde{\bm{z}}_2,a_3){\rm M}^4_{13}(\bm{x}_1,\tilde{\bm{y}}_2,a_2) {\rm M}^4_{12}(\bm{y}_1,\tilde{\bm{x}}_2,a_1){\rm M}^4_{14}(\bm{r}_1,\bm{r}_2,a_4) {\rm M}^4_{24}(\bm{s}_1,\bm{s}_2,a_5) {\rm M}^4_{34}(\bm{t}_1,\bm{t}_2,a_6)=\nonumber\\
    &{\rm M}^4_{23}(\bm{x}_1^{-1}\bm{y}_1\bm{z}_1,\tilde{\bm{z}}_2,a_3){\rm M}^4_{13}(\bm{x}_1,\tilde{\bm{y}}_2,a_2) {\rm M}^4_{24}(\tilde{\bm{s}}_1,\tilde{\bm{s}}_2,a_5) {\rm M}^4_{14}(\bm{y}_1,\tilde{\bm{r}}_2,a_4) {\rm M}^4_{12}(\bm{r}_1,\tilde{\tilde{\bm{x}}}_2,a_1)  {\rm M}^4_{34}(\bm{t}_1,\bm{t}_2,a_6)=\nonumber\\
     &{\rm M}^4_{23}(\bm{x}_1^{-1}\bm{y}_1\bm{z}_1,\tilde{\bm{z}}_2,a_3){\rm M}^4_{24}(\tilde{\bm{s}}_1,\tilde{\bm{s}}_2,a_5) {\rm M}^4_{13}(\bm{x}_1,\tilde{\bm{y}}_2,a_2) {\rm M}^4_{14}(\bm{y}_1,\tilde{\bm{r}}_2,a_4) {\rm M}^4_{34}(\bm{t}_1,\bm{t}_2,a_6)  {\rm M}^4_{12}(\bm{r}_1,\tilde{\tilde{\bm{x}}}_2,a_1)=\nonumber\\
      &{\rm M}^4_{23}(\bm{x}_1^{-1}\bm{y}_1\bm{z}_1,\tilde{\bm{z}}_2,a_3){\rm M}^4_{24}(\tilde{\bm{s}}_1,\tilde{\bm{s}}_2,a_5) {\rm M}^4_{34}(\tilde{\bm{t}}_1,\tilde{\bm{t}}_2,a_6) {\rm M}^4_{14}(\bm{x}_1,\tilde{\tilde{\bm{r}}}_2,a_4) {\rm M}^4_{13}(\bm{y}_1,\tilde{\tilde{\bm{y}}}_2,a_2)  {\rm M}^4_{12}(\bm{r}_1,\tilde{\tilde{\bm{x}}}_2,a_1)=\nonumber\\
      &{\rm M}^4_{34}(\tilde{\tilde{\bm{t}}}_1,\tilde{\tilde{\bm{t}}}_2,a_6){\rm M}^4_{24}(\bm{x}_1^{-1}\bm{y}_1\bm{z}_1,\tilde{\tilde{\bm{s}}}_2,a_5){\rm M}^4_{23}(\tilde{\bm{s}}_1,\tilde{\tilde{\bm{z}}}_2,a_3)  {\rm M}^4_{14}(\bm{x}_1,\tilde{\tilde{\bm{r}}}_2,a_4) {\rm M}^4_{13}(\bm{y}_1,\tilde{\tilde{\bm{y}}}_2,a_2)  {\rm M}^4_{12}(\bm{r}_1,\tilde{\tilde{\bm{x}}}_2,a_1)=\nonumber\\
      &{\rm M}^4_{34}(\tilde{\tilde{\bm{t}}}_1,\tilde{\tilde{\bm{t}}}_2,a_6){\rm M}^4_{24}(\bm{x}_1^{-1}\bm{y}_1\bm{z}_1,\tilde{\tilde{\bm{s}}}_2,a_5) {\rm M}^4_{14}(\bm{x}_1,\tilde{\tilde{\bm{r}}}_2,a_4) {\rm M}^4_{23}(\tilde{\bm{s}}_1,\tilde{\tilde{\bm{z}}}_2,a_3) {\rm M}^4_{13}(\bm{y}_1,\tilde{\tilde{\bm{y}}}_2,a_2)  {\rm M}^4_{12}(\bm{r}_1,\tilde{\tilde{\bm{x}}}_2,a_1).
\end{align}
Furthermore, using the right-hand side of the tetrahedron equation, and taking into account \eqref{noncomm-NLS}, we have:
\begin{align}\label{6fac-NLS-right}
   &{\rm M}^4_{12}(\bm{x}_1,\bm{x}_2,a_1){\rm M}^4_{13}(\bm{y}_1,\bm{y}_2,a_2) {\rm M}^4_{23}(\bm{z}_1,\bm{z}_2,a_3){\rm M}^4_{14}(\bm{r}_1,\bm{r}_2,a_4) {\rm M}^4_{24}(\bm{s}_1,\bm{s}_2,a_5) {\rm M}^4_{34}(\bm{t}_1,\bm{t}_2,a_6)=\nonumber\\
   &{\rm M}^4_{12}(\bm{x}_1,\bm{x}_2,a_1){\rm M}^4_{13}(\bm{y}_1,\bm{y}_2,a_2) {\rm M}^4_{14}(\bm{r}_1,\bm{r}_2,a_4) {\rm M}^4_{23}(\bm{z}_1,\bm{z}_2,a_3) {\rm M}^4_{24}(\bm{s}_1,\bm{s}_2,a_5) {\rm M}^4_{34}(\bm{t}_1,\bm{t}_2,a_6)=\nonumber\\
   &{\rm M}^4_{12}(\bm{x}_1,\bm{x}_2,a_1){\rm M}^4_{13}(\bm{y}_1,\bm{y}_2,a_2) {\rm M}^4_{14}(\bm{r}_1,\bm{r}_2,a_4) {\rm M}^4_{34}(\hat{\bm{t}}_1,\hat{\bm{t}}_2,a_6) {\rm M}^4_{24}(\bm{z}_1,\hat{\bm{s}}_2,a_5) {\rm M}^4_{23}(\bm{s}_1,\hat{\bm{z}}_2,a_3) =\nonumber
   \\
   &{\rm M}^4_{12}(\bm{x}_1,\bm{x}_2,a_1) {\rm M}^4_{34}(\hat{\hat{\bm{t}}}_1,\hat{\hat{\bm{t}}}_2,a_6)  {\rm M}^4_{14}(\bm{y}_1,\hat{\bm{r}}_2,a_4) {\rm M}^4_{13}(\bm{r}_1,\hat{\bm{y}}_2,a_2) {\rm M}^4_{24}(\bm{z}_1,\hat{\bm{s}}_2,a_5) {\rm M}^4_{23}(\bm{s}_1,\hat{\bm{z}}_2,a_3) =\nonumber \\
   &{\rm M}^4_{34}(\hat{\hat{\bm{t}}}_1,\hat{\hat{\bm{t}}}_2,a_6) {\rm M}^4_{12}(\bm{x}_1,\bm{x}_2,a_1)  {\rm M}^4_{14}(\bm{y}_1,\hat{\bm{r}}_2,a_4) {\rm M}^4_{24}(\bm{z}_1,\hat{\bm{s}}_2,a_5) {\rm M}^4_{13}(\bm{r}_1,\hat{\bm{y}}_2,a_2)  {\rm M}^4_{23}(\bm{s}_1,\hat{\bm{z}}_2,a_3) =\nonumber\\
   &{\rm M}^4_{34}(\hat{\hat{\bm{t}}}_1,\hat{\hat{\bm{t}}}_2,a_6) {\rm M}^4_{24}(\bm{x}_1^{-1}\bm{y}_1\bm{z}_1,\hat{\hat{\bm{s}}}_2,a_5) {\rm M}^4_{14}(\bm{x}_1,\hat{\hat{\bm{r}}}_2,a_4) {\rm M}^4_{12}(\bm{y}_1,\hat{\bm{x}}_2,a_1)   {\rm M}^4_{13}(\bm{r}_1,\hat{\bm{y}}_2,a_2)  {\rm M}^4_{23}(\bm{s}_1,\hat{\bm{z}}_2,a_3) =\nonumber\\
   &{\rm M}^4_{34}(\hat{\hat{\bm{t}}}_1,\hat{\hat{\bm{t}}}_2,a_6) {\rm M}^4_{24}(\bm{x}_1^{-1}\bm{y}_1\bm{z}_1,\hat{\hat{\bm{s}}}_2,a_5) {\rm M}^4_{14}(\bm{x}_1,\hat{\hat{\bm{r}}}_2,a_4)  {\rm M}^4_{23}(\hat{\bm{s}}_1,\hat{\hat{\bm{z}}}_2,a_3) {\rm M}^4_{13}(\bm{y}_1,\hat{\hat{\bm{y}}}_2,a_2)  {\rm M}^4_{12}(\bm{r}_1,\hat{\hat{\bm{x}}}_2,a_1).
      \end{align}
      From \eqref{6fac-NLS-left} and \eqref{6fac-NLS-right} follows that
    \begin{align*}
        &{\rm M}^4_{34}(\tilde{\tilde{\bm{t}}}_1,\tilde{\tilde{\bm{t}}}_2,a_6){\rm M}^4_{24}(\bm{x}_1^{-1}\bm{y}_1\bm{z}_1,\tilde{\tilde{\bm{s}}}_2,a_5) {\rm M}^4_{14}(\bm{x}_1,\tilde{\tilde{\bm{r}}}_2,a_4) {\rm M}^4_{23}(\tilde{\bm{s}}_1,\tilde{\tilde{\bm{z}}}_2,a_3) {\rm M}^4_{13}(\bm{y}_1,\tilde{\tilde{\bm{y}}}_2,a_2)  {\rm M}^4_{12}(\bm{r}_1,\tilde{\tilde{\bm{x}}}_2,a_1)=\\
        &{\rm M}^4_{34}(\hat{\hat{\bm{t}}}_1,\hat{\hat{\bm{t}}}_2,a_6) {\rm M}^4_{24}(\bm{x}_1^{-1}\bm{y}_1\bm{z}_1,\hat{\hat{\bm{s}}}_2,a_5) {\rm M}^4_{14}(\bm{x}_1,\hat{\hat{\bm{r}}}_2,a_4)  {\rm M}^4_{23}(\hat{\bm{s}}_1,\hat{\hat{\bm{z}}}_2,a_3) {\rm M}^4_{13}(\bm{y}_1,\hat{\hat{\bm{y}}}_2,a_2)  {\rm M}^4_{12}(\bm{r}_1,\hat{\hat{\bm{x}}}_2,a_1).
    \end{align*}
    
The above implies 
\begin{align*}
&\tilde{\tilde{\bm{s}}}_2\tilde{\tilde{\bm{r}}}_2^{-1}\tilde{\tilde{\bm{y}}}_2=\hat{\hat{\bm{s}}}_2\hat{\hat{\bm{r}}}_2^{-1}\hat{\hat{\bm{y}}}_2,\quad \tilde{\tilde{\bm{t}}}_2^{-1}\tilde{\tilde{\bm{z}}}_2\tilde{\tilde{\bm{y}}}_2^{-1}\bm{r}_1 =\hat{\hat{\bm{s}}}_2\hat{\hat{\bm{r}}}_2^{-1}\hat{\hat{\bm{y}}}_2,\\
&a_1\bm{x}_1^{-1}\bm{y}_1\bm{z}_1
\tilde{\bm{s}}_1\tilde{\tilde{\bm{x}}}_2^{-1}+a_4\tilde{\tilde{\bm{s}}}_2\tilde{\tilde{\bm{r}}}_2^{-1}\bm{y}_1\bm{r}_1=a_1\bm{x}_1^{-1}\bm{y}_1\bm{z}_1
\hat{\bm{s}}_1\hat{\hat{\bm{x}}}_2^{-1}+a_4\hat{\hat{\bm{s}}}_2\hat{\hat{\bm{r}}}_2^{-1}\bm{y}_1\bm{r}_1,\\
& \tilde{\tilde{\bm{t}}}_2\tilde{\tilde{\bm{s}}}_2^{-1}\tilde{\tilde{\bm{z}}}_2\tilde{\tilde{\bm{y}}}_2^{-1}\tilde{\tilde{\bm{x}}}_2 =\hat{\hat{\bm{t}}}_2\hat{\hat{\bm{s}}}_2^{-1}\hat{\hat{\bm{z}}}_2\hat{\hat{\bm{y}}}_2^{-1}\hat{\hat{\bm{x}}}_2,\quad \bm{x}_1^{-1}\bm{y}_1\bm{z}_1\tilde{\tilde{\bm{z}}}_2\tilde{\tilde{\bm{y}}}_2^{-1}\tilde{\tilde{\bm{x}}}_2=\bm{x}_1^{-1}\bm{y}_1\bm{z}_1\hat{\hat{\bm{z}}}_2\hat{\hat{\bm{y}}}_2^{-1}\hat{\hat{\bm{x}}}_2,\\
&a_2
\tilde{\tilde{\bm{t}}}_1\tilde{\tilde{\bm{z}}}_2\tilde{\tilde{\bm{y}}}_2^{-1}\bm{r}_1+a_1a_5\tilde{\tilde{\bm{t}}}_2\tilde{\tilde{\bm{s}}}_2^{-1}\tilde{\bm{s}}_1\tilde{\tilde{\bm{x}}}_2^{-1}=a_2
\hat{\hat{\bm{t}}}_1\hat{\hat{\bm{z}}}_2\hat{\hat{\bm{y}}}_2^{-1}\bm{r}_1+a_1a_5\hat{\hat{\bm{t}}}_2\hat{\hat{\bm{s}}}_2^{-1}\hat{\bm{s}}_1\hat{\hat{\bm{x}}}_2^{-1},
\end{align*}
and $\tilde{\tilde{\bm{r}}}_2=\hat{\hat{\bm{r}}}_2$, $\bm{x}_1\bm{y}_1\tilde{\tilde{\bm{x}}}_2=\bm{x}_1\bm{y}_1\hat{\hat{\bm{x}}}_2$ and $\bm{x}_1\tilde{\tilde{\bm{y}}}_2=\bm{x}_1\hat{\hat{\bm{y}}}_2$. From the latter follows that $\tilde{\tilde{\bm{r}}}_2=\hat{\hat{\bm{r}}}_2$, $\tilde{\tilde{\bm{x}}}_2=\hat{\hat{\bm{x}}}_2$ and $\tilde{\tilde{\bm{y}}}_2=\hat{\hat{\bm{y}}}_2$ in view of which the above system implies that $\tilde{\bm{s}}_1=\hat{\bm{s}}_1$, $\tilde{\tilde{\bm{s}}}_2=\hat{\hat{\bm{s}}}_2$, $\tilde{\tilde{\bm{t}}}_1=\hat{\hat{\bm{t}}}_1$, $\tilde{\tilde{\bm{z}}}_2=\hat{\hat{\bm{z}}}_2$ and $\tilde{\tilde{\bm{t}}}_2=\hat{\hat{\bm{t}}}_2$. Therefore, map \eqref{noncomm-NLS} is a noncommutative parametric tetrahedron map.
\end{proof}

\subsection{Restriction to a noncommutative tetrahedron Sergeev's map}
In this section, we use the invariants and we restrict map \eqref{noncomm-NLS} to invariant leaves. 

In particular, we have the following. 

\begin{theorem}
Map \eqref{noncomm-NLS} can be restricted to the following three-dimensional map $(\bm{x}_2,\bm{y}_2,\bm{z}_2)\rightarrow (\bm{u}_2,\bm{v}_2,\bm{w}_2)$ given by
\begin{subequations}\label{noncomm-Sergeev-restr}
\begin{align}
    &\bm{u}_2=a(a\bm{x}_2^{-1}+b\bm{z}_2\bm{y}_2^{-1})^{-1},\\
    &\bm{v}_2=a^{-1}bc^{-1}\bm{x}_2\bm{z}_2,\\
    &\bm{w}_2=a^{-1}(a\bm{x}_2^{-1}+b\bm{z}_2\bm{y}_2^{-1})y_2.
\end{align}
\end{subequations}
This three-parametric family of maps includes the fully noncommutative Sergeev's map:
\begin{subequations}\label{noncomm-Sergeev}
\begin{align}
    \bm{x}\mapsto\bm{u}&=(\bm{x}^{-1}+\bm{z}\bm{y}^{-1})^{-1},\\
    \bm{y}\mapsto\bm{v}&=k^{-1}\bm{x}\bm{z},\\
    \bm{z}\mapsto\bm{w}&=\bm{x}^{-1}\bm{y}+\bm{z}.
\end{align}
\end{subequations}

The noncommutative Sergeev's map \eqref{noncomm-Sergeev} has Lax representation
\begin{equation}\label{noncomm-Sergeev-Lax}
    {\rm L}^3_{12}(\bm{u};k){\rm L}^3_{13}(\bm{v};k){\rm L}^3_{23}(\bm{w};k)= {\rm L}^3_{23}(\bm{z};k){\rm L}^3_{13}(\bm{y};k){\rm L}^3_{12}(\bm{x};k),
\end{equation}
where ${\rm L}(\bm{x};k)=\begin{pmatrix}
1 & \bm{x} \\
k\bm{x}^{-1} & 0
\end{pmatrix}$. Moreover, it is a noncommutative, noninvolutive tetrahedron map.
\end{theorem}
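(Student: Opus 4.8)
The plan is to realise \eqref{noncomm-Sergeev-restr} as the restriction of the six-dimensional map \eqref{noncomm-NLS} to an invariant leaf, and then to obtain \eqref{noncomm-Sergeev} from it by specialising the parameters. In \eqref{noncomm-NLS} the ``first'' components evolve among themselves, $\bm{u}_1=\bm{y}_1$, $\bm{v}_1=\bm{x}_1$, $\bm{w}_1=\bm{x}_1^{-1}\bm{y}_1\bm{z}_1$, so the leaf $\mathcal{L}=\{\bm{x}_1=\bm{y}_1=\bm{z}_1=1\}$ lies inside the admissible domain $\bm{x}_1,\bm{y}_1,\bm{z}_1\in Z(\mathfrak{R})$ (since $1\in Z(\mathfrak{R})$), is a level set of the invariants $I_1,I_2,I_3$, and satisfies $\bm{u}_1=\bm{v}_1=\bm{w}_1=1$, hence is invariant. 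Substituting $\bm{x}_1=\bm{y}_1=\bm{z}_1=1$ into the $\bm{u}_2,\bm{v}_2,\bm{w}_2$ components of \eqref{noncomm-NLS} and simplifying with $a\in Z(\mathfrak{R})$ yields \eqref{noncomm-Sergeev-restr}; then setting $a=b=c=k$ collapses $a\bm{x}_2^{-1}+b\bm{z}_2\bm{y}_2^{-1}$ to $k(\bm{x}_2^{-1}+\bm{z}_2\bm{y}_2^{-1})$, and after relabelling $\bm{x}_2,\bm{y}_2,\bm{z}_2\to\bm{x},\bm{y},\bm{z}$ the three formulae become exactly \eqref{noncomm-Sergeev}.

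For the Lax representation I would note that $\begin{pmatrix}1 & \bm{x}\\ k\bm{x}^{-1} & 0\end{pmatrix}$ is precisely the Darboux matrix \eqref{DT-NLS} evaluated at $\bm{x}_1=1$, $\bm{x}_2=\bm{x}$, $a=k$, so that ${\rm L}^3_{ij}(\bm{x};k)=M^3_{ij}(1,\bm{x};k)$. The local Yang--Baxter equation already established for the matrices $M^3_{ij}$ and the correspondence behind \eqref{fully-noncomm-NLS} then restricts, on $\mathcal{L}$ and at $a=b=c=k$, to \eqref{noncomm-Sergeev-Lax}. Equivalently, one may substitute $\begin{pmatrix}1 & \bm{x}\\ k\bm{x}^{-1} & 0\end{pmatrix}$ directly into \eqref{noncomm-Sergeev-Lax}, expand the two $3\times3$ products, and read off the entries: the $(2,1)$ and $(3,1)$ entries give $\bm{u}=(\bm{x}^{-1}+\bm{z}\bm{y}^{-1})^{-1}$ and $\bm{v}=k^{-1}\bm{x}\bm{z}$, the $(1,3)$ entry gives $\bm{w}=\bm{u}^{-1}\bm{y}=\bm{x}^{-1}\bm{y}+\bm{z}$, and the remaining entries are then identically satisfied, so \eqref{noncomm-Sergeev-Lax} both holds and uniquely determines \eqref{noncomm-Sergeev}.

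The tetrahedron property I would inherit from the six-dimensional map. Since every $T^{ijk}$ sends the first components of its three active slots to $1$, the locus $\{\bm{x}_1=\bm{y}_1=\bm{z}_1=\bm{r}_1=\bm{s}_1=\bm{t}_1=1\}$ is preserved by each of the maps $T^{ijk}$ occurring on the two sides of the parametric tetrahedron equation \eqref{Par-Tetrahedron-eq} for \eqref{noncomm-NLS} (proved in the previous proposition); restricting that identity to this locus shows that \eqref{noncomm-Sergeev-restr} is itself a parametric tetrahedron map, and specialising all six parameters in \eqref{Par-Tetrahedron-eq} to a common value $k$ turns it into the ordinary functional tetrahedron equation \eqref{Tetrahedron-eq} for \eqref{noncomm-Sergeev}. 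As an alternative independent of the earlier proposition, one can apply Theorem \ref{six-factorisation} to ${\rm L}(\bm{x};k)$ directly: the six-factorisation problem \eqref{6-fac} for this matrix is the $\bm{x}_1=\dots=\bm{t}_1=1$, $a_i=k$ specialisation of the system solved in that proof, and it again forces the trivial solution. Finally, noninvolutivity is a one-line check: applying \eqref{noncomm-Sergeev} twice, the middle output of $T\circ T$ is $k^{-1}\bm{u}\bm{w}$, and since $\bm{w}=(\bm{x}^{-1}+\bm{z}\bm{y}^{-1})\bm{y}=\bm{u}^{-1}\bm{y}$ one gets $\bm{u}\bm{w}=\bm{y}$, hence $\bm{v}\circ T=k^{-1}\bm{y}\neq\bm{y}$ (alternatively $\bm{u}\circ T\neq\bm{x}$ already for generic noncommutative $\bm{x},\bm{y},\bm{z}$).

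The computations are all short substitutions and cancellations using $1,k\in Z(\mathfrak{R})$; the only point that genuinely needs care is the passage from the \emph{parametric} tetrahedron equation to the non-parametric one — verifying that the leaf $\mathcal{L}$ is simultaneously invariant under all the $T^{ijk}$ appearing on both sides of \eqref{Par-Tetrahedron-eq}, and that the specialisation of the six parameters to a common value $k$ is legitimate — together with the bookkeeping in the direct six-factorisation route, should one prefer to verify the tetrahedron property from scratch rather than by restriction.
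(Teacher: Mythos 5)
Your proposal is correct, and the restriction step (set $\bm{x}_1=\bm{y}_1=\bm{z}_1=1$, then $a=b=c=k$) and the Lax representation (direct substitution of ${\rm L}(\bm{x};k)$, which is indeed ${\rm M}(1,\bm{x};k)$) coincide with what the paper does. Where you genuinely diverge is the tetrahedron property: the paper does not inherit it from the six-dimensional map but instead runs the six-factorisation machinery of Theorem \ref{six-factorisation} afresh for the $2\times 2$ matrix ${\rm L}(\bm{x};k)$, writing out the system of equations in $\hat{\bm{x}},\dots,\hat{\bm{t}}$ produced by \eqref{6-fac} with $a_i=k$ and checking it forces the trivial solution --- i.e.\ your ``alternative'' route is the paper's primary one. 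Your inheritance argument (the locus $\{\bm{x}_1=\dots=\bm{t}_1=1\}$ is preserved by every $T^{ijk}$ since each sends the first components of its active slots to $1$, so the parametric tetrahedron identity for \eqref{noncomm-NLS} restricts to the second components) is valid and shorter, but it leans on the previous proposition and on the centrality hypothesis under which \eqref{noncomm-NLS} was established; the paper's direct check is self-contained for \eqref{noncomm-Sergeev}. One small point in your favour: you correctly observe that the paper's noninvolutivity witness $\bm{v}\circ T=k^{-1}(\bm{x}^{-1}+\bm{z}\bm{y}^{-1})^{-1}(\bm{x}^{-1}\bm{y}+\bm{z})$ simplifies to $k^{-1}\bm{y}$, so it only certifies noninvolutivity for $k\neq 1$, and your fallback via another component is the right fix.
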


\begin{proof}
Map \eqref{noncomm-Sergeev} admits the invariants $I_1=\bm{x}_1\bm{y}_1$, $I_2=\bm{x}_1+\bm{y}_1$, $I_3=\bm{y}_1\bm{z}_1$. Setting, $I_1=I_3=1$, $I_2=2$ it follows that $\bm{x}_1=\bm{y}_1=\bm{z}_1$. Substituting to \eqref{noncomm-Sergeev}, we obtain the three-dimensional map \eqref{noncomm-Sergeev-restr}.

Now, choosing $a=b=c=k\in Z(\mathfrak{R})$ in \eqref{noncomm-Sergeev-restr}, we obtain the fully noncommutative map \eqref{noncomm-Sergeev}. This is the noncommutative version of Sergeev's map, namely, if we assume that $\bm{x},\bm{y},\bm{z}\in Z(\mathfrak{R})$, then map \eqref{noncomm-Sergeev-restr} becomes map (20) in \cite{Sergeev}. Moreover, by substitution of ${\rm L}(\bm{x};k)=\begin{pmatrix}
1 & \bm{x} \\
k\bm{x}^{-1} & 0
\end{pmatrix}$ to \eqref{noncomm-Sergeev-Lax}, map \eqref{noncomm-Sergeev-restr} follows by straightforward calculations.

Regarding the tetrahedron property, we consider the $4\times 4$ generalisations of matrix ${\rm L}(\bm{x};k)$, namely the following
\begin{align*}
    &{\rm L}^4_{12}(\bm{x};k)=\begin{pmatrix}1 & \bm{x} & 0 & 0 \\ k\bm{x}^{-1} & 0 & 0 & 0 \\ 0 & 0 & 1 & 0 \\ 0 & 0 & 0 & 1\end{pmatrix}, \quad
    {\rm L}^4_{13}(\bm{x};k)=\begin{pmatrix}1 & 0 & \bm{x} & 0 \\ 0 & 1 & 0 & 0 \\ k\bm{x}^{-1} & 0 & 0 & 0 \\ 0 & 0 & 0 & 1\end{pmatrix},\quad
    {\rm L}^4_{23}(\bm{x};k)=\begin{pmatrix}1 & 0 & 0 & 0 \\ 0 & 1 & \bm{x} & 0 \\ 0 & k\bm{x}^{-1} & 0 & 0 \\ 0 & 0 & 0 & 1\end{pmatrix}\nonumber\\
   & {\rm L}^4_{14(\bm{x};k)}=\begin{pmatrix}1 & 0 & 0 & \bm{x} \\ 0 & 1 & 0 & 0 \\ 0 & 0 & 1 & 0 \\ k\bm{x}^{-1} & 0 & 0 & 0\end{pmatrix}, \quad
    {\rm L}^4_{24(\bm{x};k)}=\begin{pmatrix}1 & 0 & 0 & 0 \\ 0 & 1 & 0 & \bm{x} \\ 0 & 0 & 1 & 0 \\ 0 & k\bm{x}^{-1} & 0 & 0\end{pmatrix}, \quad
    {\rm L}^4_{34}(\bm{x};k)=\begin{pmatrix}1 & 0 & 0 & 0 \\ 0 & 1 & 0 & 0 \\ 0 & 0 & 1 & \bm{x} \\ 0 & 0 & k\bm{x}^{-1} & 0\end{pmatrix}.
\end{align*}
Substitution of the above to \eqref{6-fac} for $a_i=k$, $i=1,\ldots 6$, implies
\begin{align*}
    &\hat{\bm{x}}^{-1}+\hat{\bm{z}}\hat{\bm{y}}^{-1}+\hat{\bm{s}}\hat{\bm{r}}^{-1}=\bm{x}^{-1}+\bm{z}\bm{y}^{-1}+\bm{s}\bm{r}^{-1},\\
    &(\hat{\bm{z}}\hat{\bm{y}}^{-1}+\hat{\bm{s}}\hat{\bm{r}}^{-1})\hat{\bm{x}}=(\bm{z}\bm{y}^{-1}+\bm{s}\bm{r}^{-1})\bm{x},\quad \hat{\bm{s}}\hat{\bm{r}}^{-1}\hat{\bm{y}}=\bm{s}\bm{r}^{-1}\bm{y},\\
    & \hat{\bm{t}}\hat{\bm{s}}^{-1}\hat{\bm{z}}\hat{\bm{y}}^{-1}\hat{\bm{x}}=\bm{t}\bm{s}^{-1}\bm{z}\bm{y}^{-1}\bm{x},\quad \hat{\bm{t}}^{-1}\hat{\bm{z}}^{-1}\hat{\bm{x}}^{-1}=\bm{t}^{-1}\bm{z}^{-1}\bm{x}^{-1},\\
    &\hat{\bm{z}}^{-1}\hat{\bm{x}}^{-1}+\hat{\bm{t}}\hat{\bm{s}}^{-1}\hat{\bm{x}}^{-1}+\hat{\bm{t}}\hat{\bm{s}}^{-1}\hat{\bm{z}}\hat{\bm{y}}^{-1}=\bm{z}^{-1}\bm{x}^{-1}+\bm{t}\bm{s}^{-1}\bm{x}^{-1}+\bm{t}\bm{s}^{-1}\bm{z}\bm{y}^{-1},
\end{align*}
as well as $\hat{\bm{x}}=\bm{x}$, $\hat{\bm{y}}=\bm{y}$ and $\hat{\bm{r}}=\bm{r}$ in view of which the above system implies $\hat{\bm{s}}=\bm{s}$, $\hat{\bm{z}}=\bm{z}$ and $\hat{\bm{t}}=\bm{t}$. Therefore, map \eqref{noncomm-Sergeev} satisfies the tetrahedron equation.

Finally, the noninvolutivity follows, for instance, from $\bm{v}\circ (\bm{u},\bm{v},\bm{w})=k^{-1}(\bm{x}^{-1}+\bm{z}\bm{y}^{-1})^{-1}(\bm{x}^{-1}\bm{y}+\bm{z})\neq \bm{y}$.
\end{proof}

\section{Conclusions}\label{conclusions}
In this paper, we showed which additional matrix six-factorisation condition must be satisfied for a map with Lax representation to be a tetrahedron map, namely we proved Theorem \ref{six-factorisation}. A similar result was obtained in \cite{Kouloukas-Papageorgiou} for the case of Yang--Baxter maps.

Moreover, we derived new noninvolutive, noncommutative tetrahedron maps, namely maps \eqref{gauge-mKdV-map} and \eqref{1d-rel-map}, generated via the local Yang--Baxter equation by employing: i. a matrix related to a gauge transformation for the lattice mKdV equation \cite{Frank-Walker}, and ii. a generalised matrix related to one-dimensional relativistic elastic collisions of two particles \cite{Kouloukas}. Maps \eqref{gauge-mKdV-map} and \eqref{1d-rel-map} were used as illustrative examples to show that the matrix six-factorisation condition also works for correspondences.

Finally, we showed how to construct noncommutative tetrahedron maps using Darboux transformations of nonocommutative integrable systems. In particular, using a Darboux transformation for the noncommutative coupled NLS system, we constructed a noncommutative tetrahedron map \eqref{noncomm-NLS}. We showed that the latter can be restricted to a noncommutative Sergeev's map, namely map \eqref{noncomm-Sergeev}.

All the tetrahedron maps derived in this paper are noninvolutive. Noninvolutive maps are more interesting in terms of their dynamics.

The results can be extended in the following ways.

\begin{itemize}
    \item In \cite{IKKRP} the algebraic properties of linear tetrahedron maps were studied, and linearised versions of nonlinear tetrahedron maps were considered by linearising their associated Lax representations. One could study the properties of the linearised versions of nonlinear, noncommutive tetrahedron maps derived in this paper.
    
    \item In \cite{Sergei-Sokor} a matrix trifactorisation problem other from the local Yang--Baxter equation was used to derive tetrahedron maps. One can can employ this matrix trifactorisation problem in order to derive new, noncommutative solutions to Zamolodchikov's tetrahedron equation. Moreover, it is expected that a theorem similar to \ref{six-factorisation} will hold for maps being generated by this matrix trifactorisation problem.
    
    \item The Liouville integrability of all the derived tetrahedron maps in this paper is an open problem. It is expected that centrality assumptions should be made.
    
    \item Study the associated noncommutative $3D$ lattice systems. There several methods in the literature on how to associate tetrahedron maps to lattice equations (for example, using symmetries \cite{Kassotakis-Tetrahedron}). Moreover, well-known methods for deriving $2D$ lattice equations from Yang--Baxter maps could be also employed in order to associate noncommutative tetrahedron maps with noncommutative lattice equations (for instance, as lifts to the corresponding tetrahedron map \cite{Kouloukas-Tran, Pap-Tongas} or using the inviariants of the maps in separable form \cite{Pavlos-Maciej, Pavlos-Maciej-2}.) It makes sense to compare these B\''acklund type of transformations to those derived using the method presented in \cite{FKRX} for integrable $2D$ lattice equations.
    
    \item The noncommutative difference equations associated with the tetrahedron maps derived in section \ref{NLS-noncomm_tetrahedron_maps} are probably certain discretisations of noncommutative NLS type of PDEs. One may construct these nonlinear systems of PDEs by taking certain continuum limits. It is expected that the tetrahedron maps of section \ref{NLS-noncomm_tetrahedron_maps} will be B\"acklund type of transformations for these nonlinear systems of PDEs.
\end{itemize}

\section{Acknowledgements}
The work on sections \ref{prelim}, \ref{Matrix-six-factorisation_problem} and \ref{conclusions} was funded by the Russian Science Foundation (project No. 21-71-30011), whereas the work on sections \ref{intro} and \ref{NLS-noncomm_tetrahedron_maps}  was carried out within the framework of a development programme for the Regional Scientific and Educational Mathematical Centre of the P.G. Demidov Yaroslavl State University
with financial support from the Ministry of Science and Higher Education of the Russian Federation
(Agreement on provision of subsidy from the federal budget No. 075-02-2022-886).

I would like to thank Dr. S. Igonin and Ms. A. Kutuzova for several useful discussions.


\begin{thebibliography}{10}
\bibitem{Baxter-1983}
{R.J. Baxter,} {On Zamolodchikov's Solution of the Tetrahedron Equations,} {Commun. Math. Phys. 88 (1983)} {185--205}.

\bibitem{Baxter-1986}
{R.J. Baxter,} {The Yang--Baxter equations and the Zamolodchikov model,} {Phys. D: Nonlinear Phenom. 18 (1986)} {321--347}.

\bibitem{Bazhanov-Sergeev}
{V.V. Bazhanov and S.M. Sergeev,} {Zamolodchikov's tetrahedron equation and hidden structure of quantum groups,} {J.
Phys. A 39 (2006)} {3295--3310}.

\bibitem{Bazhanov-Mangazeev-Sergeev}
{V.V. Bazhanov, V.V. Mangazeev, and S.M. Sergeev,} {Quantum geometry of three-dimensional lattices,} {J. Stat. Mech.
(2008)} {P07004}.

\bibitem{Bobenko-Suris}
{A.I. Bobenko and Yu.B. Suris,} {Integrable noncommutative equations
on quad-graphs. The consistency approach,} {Lett. Math. Phys.} {61 (2002)} {241--254}.

\bibitem{Dimakis-Hoissen}
{A. Dimakis and F. M\"uller-Hoissen,} {Burgers and Kadomtsev-Petviashvili hierarchies: A functional representation approach,} {Theor. Math. Phys. 152 (2007)} {933--947}.

\bibitem{Dimakis-Hoissen-2015}
{A. Dimakis and F. M\"uller-Hoissen,} {Simplex and polygon equations,} {SIGMA 11 (2015)} {042}.


\bibitem{Doliwa-2014}
{A. Doliwa,} {Non-Commutative Rational Yang--Baxter Maps,} {Lett. Math. Phys.} {104 (2014)} {299--309}.

\bibitem{Doliwa-Kashaev}
{A. Doliwa and R.M. Kashaev,} {Non-commutative bi-rational maps satisfying Zamolodchikov equation equation, and Desargues lattices,} {J. Math. Phys.} {61 (2020)} {092704}.

\bibitem{Doliwa-Noumi}
{A. Doliwa and M. Noumi,} {The Coxeter relations and KP map for non-commuting symbols,} {Lett. Math. Phys. 110 (2020)} {2743--2762}.


\bibitem{FKRX}
{X. Fisenko, S. Konstantinou-Rizos, and P. Xenitidis.} 
{A discrete Darboux-Lax scheme for integrable difference equations} 
{\em Chaos, Solitons and Fractals}  {\textbf{158}} {112059} (2022).

\bibitem{IKKRP}
{S. Igonin, V. Kolesov, S. Konstantinou-Rizos, and M. Preobrazhenskaia,} 
{Tetrahedron maps, Yang--Baxter maps, and partial linearisations} 
{\em J. Phys. A: Math. Theor.} {\textbf{54}} {505203} (2021).

\bibitem{Sergei-Sokor}
{S. Igonin and S. Konstantinou-Rizos,} 
{Algebraic and differential-geometric constructions of set-theoretical solutions to the Zamolodchikov tetrahedron equation} 
{\em arXiv:2110.05998} (2021).

\bibitem{Kassotakis-Kouloukas}
{P. Kassotakis and T. Kouloukas,}
{On non-abelian quadrirational Yang--Baxter maps.} 
{\em arXiv:2109.11975 } (2021)

\bibitem{Pavlos-Maciej}
{P. Kassotakis and M. Nieszporski,} {Families of integrable equations,} {SIGMA 7 (2011) 100} {14pp}.

\bibitem{Pavlos-Maciej-2}
{P. Kassotakis and M. Nieszporski,} {On non-multiaffine consistent-around-the-cube lattice equations,} {Phys. Lett. A 376 (2012)} {3135--3140}.

\bibitem{Kassotakis-Tetrahedron}
{P. Kassotakis, M. Nieszporski, V. Papageorgiou, and A. Tongas,} {Tetrahedron maps and symmetries of three dimensional integrable discrete equations,} {J. Math. Phys. 60 (2019)} {123503}.

\bibitem{Kapranov-Voevodsky} 
{M.M. Kapranov and V.A. Voevodsky,} {2-categories and Zamolodchikov tetrahedra equations. In: Algebraic Groups and
Their Generalizations: Quantum and Infinite-Dimensional Methods (University Park, PA, 1991), pp. 177--259,} {Proc.
Sympos. Pure Math. 56 Amer. Math. Soc., Providence, RI, 1994.}

\bibitem{Kashaev-Sergeev} 
{R.M. Kashaev, I.G. Koperanov, and S.M. Sergeev,} {Functional Tetrahedron Equation,} {Theor. Math. Phys.} {117 (1998)} {370--384}.


\bibitem{Sokor-2020}
{S. Konstantinou-Rizos,}
{Nonlinear Schr\"odinger type tetrahedron maps.} 
{\em Nuclear Phys. B } {\textbf{960}} (2020) {115207} 

\bibitem{Sokor-Xenitidis}
{S. Konstantinou-Rizos and P. Xenitidis,} 
{Integrable discretisations of a noncommutative NLS equation.} 
{\em (in preparation)} (2022).

\bibitem{Korepanov}
{I.G. Korepanov,} {Algebraic integrable dynamical systems, $2+1$-dimensional models in wholly discrete space-time, and inhomogeneous models in 2-dimensional statistical physics,} {(1995) solv-int/9506003.}

\bibitem{Kouloukas}
{T.E. Kouloukas,} {Relativistic collisions as Yang--Baxter maps,} {Phys. Lett. A 381 (2017)} {3445--3449}.


\bibitem{Kouloukas-Papageorgiou}
{T.E. Kouloukas and V.G. Papageorgiou,} {Yang--Baxter maps with first-degree polynomial $2\times 2$ Lax matrices,} {J. Phys. A: Math. Theor. 42 (2009)} {404012}.

\bibitem{Kouloukas-Tran}
{T. E. Kouloukas and D. Tran,} {Poisson structures for lifts and periodic reductions of integrable lattice equations,} {J. Phys. A: Math. Theor. 48 (2015)} {075202}.


\bibitem{Kupershmidt}
{B.A. Kupershmidt,} {Noncommutative mathematics of Lagrangian, Hamiltonian, and integrable systems,} {Math. surv. and monographs 78 (2000), AMS,} {http://dx.doi.org/10.1090/surv/078}.

\bibitem{Nijhoff-Capel}
{F.W. Nijhoff and H.W. Capel,} {The direct linearization approach to hierarchies of integrable PDE's in $2+1$ dimensions. I. Lattice equations and the differential-difference Hierarchies,} {Inverse Problems 6 (1990)} {567--590}.

\bibitem{Nijhoff}
{J.M. Maillet, F. Nijhoff,} {The tetrahedron equation and the four-simplex equation,} {Phys. Lett. A 134 (1989)} {221--228}.

\bibitem{Maillet-Nijhoff}
{J.M. Maillet, F. Nijhoff,} {Integrability for multidimensional lattice models,} {Phys. Lett. B 224 (1989)} {389--396}.


\bibitem{Frank-Walker}
{F.W. Nijhoff and A.J. Walker,} {The discrete and continuous Painlev\'e VI hierarchy and the Garnier systems,} {Glasgow Mathematical Journal 43 (2001)} {109--123}.

\bibitem{Nimmo}
{J.J.C. Nimmo,} {On a non-Abelian Hirota-Miwa equation,} {J. Phys. A: Math. Gen. 39 (2006)} {5053}.

\bibitem{Pap-Tongas}
{V.G. Papageorgiou and A.G. Tongas,} {Yang--Baxter maps associated to elliptic curves,} {(2009) arXiv:0906.3258v1.}

\bibitem{Sergeev}
{S.M. Sergeev,} {Solutions of the Functional Tetrahedron Equation Connected with the Local Yang--Baxter Equation for the Ferro-Electric Condition,} {Lett. Math. Phys. 45 (1998)}  {113--119}.

\bibitem{Sokolov}
{V. Sokolov,} 
{Algebraic structures in integrability} 
{Vladimir Sokolov. ``Algebraic Structures in Integrability''} {\em World Scientific Publishing Co. Pte. Ltd, Singapore} (2020). DOI: 10.1142/11809 

\bibitem{Talalaev}
{D.V. Talalaev,} 
{Tetrahedron equation: algebra, topology, and integrability} {Russian Math. Surveys 76 (2021)} {685--721}.

\bibitem{Zamolodchikov} 
{A.B. Zamolodchikov,} {Tetrahedra equations and integrable systems in three-dimensional space,} {Sov. Phys. JETP 52 (1980)} {325--336}.

\bibitem{Zamolodchikov-2} 
{A.B. Zamolodchikov,} {Tetrahedron equations and the relativistic S matrix of straight strings in (2+1)-dimensions,} {Commun. Math. Phys. 79 (1981)} {489--505}.

\end{thebibliography}
\end{document}